\newtheorem{theorem}{Theorem}
\newtheorem{cor}{Corollary}
\newtheorem{lem}{Lemma}
\newtheorem{defn}{Definition}
\newtheorem{rem}{Remark}
\newcommand{\Al}{\alpha}
\newcommand{\pa}[1]{\left( #1 \right)}
\newcommand{\pac}[1]{\left\{ {#1} \right\}}
\newcommand{\beq}[2]{\begin{equation}\label{#1} #2 \end{equation}}
\newcommand{\bal}[2]{{\setlength\arraycolsep{2pt}\begin{eqnarray}\label{#1} #2 \end{eqnarray}}}
\newcommand{\beqn}[2]{\setlength\arraycolsep{2pt}\begin{equation*}\label{#1} {#2} \end{equation*}}
\newcommand{\baln}[2]{\setlength\arraycolsep{1.5pt}\begin{eqnarray*}\label{#1} #2 \end{eqnarray*}}
\newcommand{\iid}{i.i.d.}
\newcommand{\nn}{\nonumber}
\newcommand{\barr}[2]{\left\{\begin{array}{#1}#2 \end{array}\right.}
\newcommand{\bepsf}[2]{\begin{center}\epsfig{figure=#1,width=#2 cm}\end{center}}
\title{Design and Analysis of Stability-Guaranteed PUFs}
\author{\IEEEauthorblockN{Wei-Che Wang, Yair Yona, Suhas Diggavi and Puneet Gupta }
\IEEEauthorblockA{Department of Electrical Engineering, University of California, Los Angeles}
\{weichewang, suhasdiggavi, yairyo99, puneetg\}@ucla.edu
}
\begin{document}
\maketitle
\begin{abstract}
The lack of stability is one of the major limitations that constrains Physical Unclonable Function (PUF) from being put in widespread practical use. In this paper, we propose a weak PUF and a strong PUF that are both completely stable with 0\% intra-distance. These PUFs are called Locally Enhanced Defectivity Physical Unclonable Function (LEDPUF). The source of randomness of a LEDPUF is extracted from locally enhance defectivity without affecting other parts of the chip. A LEDPUF is a pure functional PUF that does not require helper data, fuzzy comparator, or any kinds of correction schemes as conventional parametric PUFs do. A weak LEDPUF is constructed by forming arrays of Directed Self Assembly (DSA) random connections is presented, and the strong LEDPUF is implemented by using the weak LEDPUF as the key of a keyed-hash message authentication code (HMAC). Our simulation and statistical results show that the entropy of the weak LEDPUF bits is close to ideal, and the inter-distances of both weak and strong LEDPUFs are about 50\%, which means that these LEDPUFs are not only stable but also unique.

We develop a new unified framework for evaluating the level of security of PUFs, based on password security, by using information theoretic tools of guesswork. The guesswork model allows to quantitatively compare, with a single unified metric,  PUFs with varying levels of stability, bias and available side information. In addition, it generalizes other measures to evaluate the security level such as min-entropy and mutual information. We evaluate guesswork-based security of some measured SRAM and Ring Oscillator PUFs as an example and compare them with LEDPUF to show that stability has a more severe impact on the PUF security than biased responses. Furthermore, we find the guesswork of three new problems: Guesswork under probability of attack failure, the guesswork of idealized version of a message authentication code, and the guesswork of strong PUFs that are used for authentication.

\end{abstract}
\let\thefootnote\relax\footnotetext{This work was supported in part by NSF grants 1136174 and 1321120.}


\vspace{-0.2in}

\section{Introduction}
A Physical Unclonable Function (PUF) is a small piece of circuitry such that its behavior, or Challenge Response Pair (CRP) \cite{Roel2010}, is uniquely defined and it is hard to be predicted and replicated because of the intrinsic random physical nature and the uncontrollability of process variations. As a security primitive, PUF can enable low overhead hardware identification, tracing, and authentication during the global manufacturing chain. The first PUF was introduced more than a decade ago \cite{Blaise02}. Since then, many silicon PUF implementations have been proposed, such as Arbiter PUF \cite{Suh07}, Ring Oscillator (RO) PUF \cite{Yin13}, SRAM PUF \cite{Holcomb09}, and many other variations. 

Since the key commonality between all current silicon PUF implementations is their use of {\em parametric} manufacturing variations, two of the most challenging design tasks a PUF designer will encounter are :
\begin{enumerate}
    \item How to make the PUF unique and stable even under extreme conditions without expensive implementation cost?
    \item How to evaluate the security level of a PUF given its uniqueness and stability measurements?
\end{enumerate}

\vspace{-0.2in}
\subsection{Limitations of Parametric PUFs}
\subsubsection{Random Local Variation Extraction}
One of the major concerns of parametric PUFs is that local variation should be the {\em only} entropy source for these PUFs \cite{Maiti2009}. However, from our experiments on a large silicon data set \cite{Lerong11}, only 13\% of total variation is random local variation, which means that most variation is coming from global or spatial variation. Any attempt to use global or spatial variation as the source of randomness can make them vulnerable to a class of {\em process side channel attacks}. For instance, two PUFs on the same (X,Y) location on different wafers are highly correlated (due to large wafer-level systematics present in most modern fabrication processes). As a result, a few sacrificial wafers can aid in developing a relatively straightforward side channel attack. We tested this side channel attack on silicon RO PUF measurements in 65nm technology across 300 wafers. Figure \ref{figure:attack} shows that the inter-distance \cite{Maiti2009} on the same (X,Y) is much smaller than the inter-distance across all PUFs. Therefore, an adversary with possession of a reference PUF, which is fabricated at the same (X,Y) location as the target PUF, would have a higher probability of guessing the correct answer than random guessing. The radial nature of systematic across wafer variation \cite{Lerong11} means that just a few reference PUFs drawn carefully may be sufficient for attackers instead of keeping full sacrificial wafers. 

\begin{figure}[htb]
\centering
\includegraphics[width=3.5in]{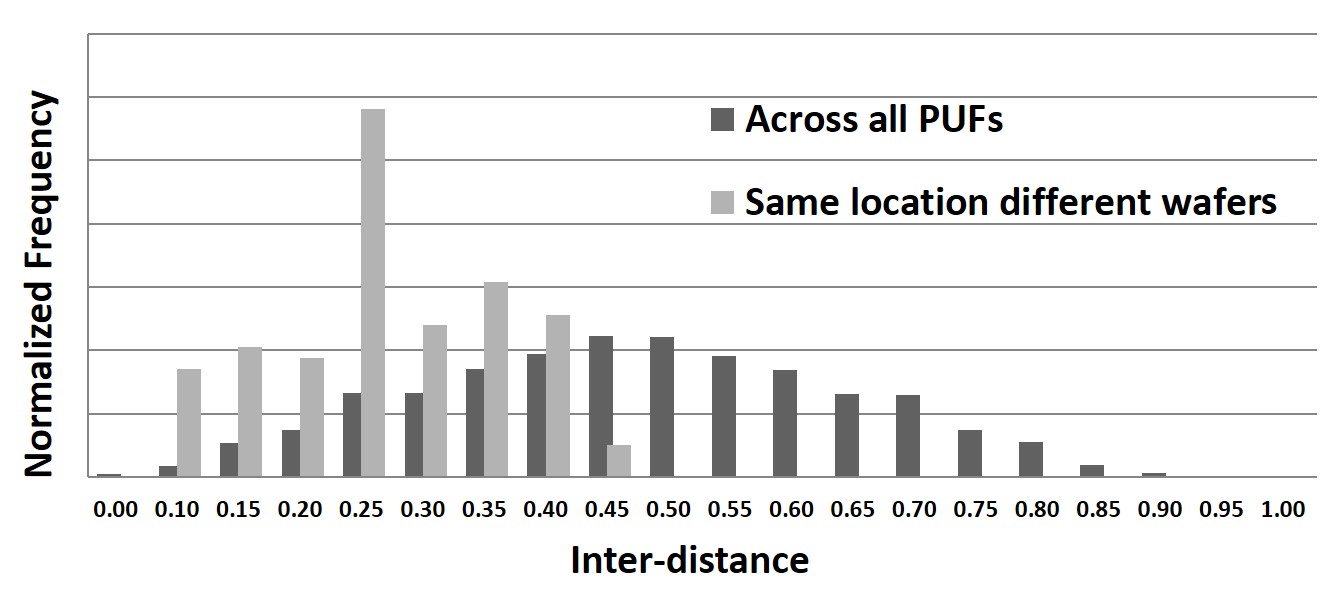}
\caption{The inter-distance of PUFs from same (X,Y) location on different wafers is much smaller than that of across all PUFs, which demonstrates a possible side channel attack.}
\label{figure:attack}
\end{figure}

\subsubsection{Measurement Noise}
Measurement noise could be another big issue for parametric PUFs and needs to be carefully compensated. For instance, metastability of the arbiter circuit for Arbiter PUFs and accumulated jitter in RO PUFs can be sources of measurement noises. For weak PUF measurement, we evaluate the intra-distance \cite{Maiti2009} of SRAM PUFs using fifteen commercial 45nm SOI test chips, where each consists 176kB data memory. The power-up state is measured 10 times during an 8-hour period, and the mean of intra-distance distribution is 2.57\%. Since the experiment is done in room temperature with exactly same settings, the difference is essentially contributed by the measurement noise.

\subsubsection{Environmental fluctuations and wearout}
Existing silicon PUFs are in nature susceptible to environmental fluctuations \cite{LangLin12} and wearout \cite{Nazhandali14}. To account for the instability issue, techniques such as error correction code (ECC), helper data or fuzzy comparator must be applied. A possible worst case scenario is when the environmental factors change significantly but yet remain constant. For instance, the PUF is enrolled at 20\textdegree C and is verified at 80\textdegree C. In this case, a fuzzy extraction process may not be able to recover the initial PUF response, even for multiple samples of the PUF. 

\vspace{-0.1in}
\subsection{Techniques to Improve Parametric PUF Quality}
A variety of techniques have been intensively studied over the years to extract random local variations or to make a PUF more stable and reliable. A Non-Volatile Memory (NVM) based PUF without helper data is presented in \cite{Wenjie14}. However, besides its hardware and calibration overhead, the results of uniqueness and entropy analysis are also missing. In \cite{Chi-En13}, the local randomness is distilled by modeling and subtracting the systematic variation. A similar technique is to subtract the averaged frequency from multiple measurements to reveal the true local random variation \cite{Feiten2015}. However, the calculation and information storage requirement come with the cost of addition latency and hardware. Taking the majority vote \cite{Majzoobi2010} or finding stable responses \cite{PotkonjakM14} are possible techniques to eliminate the measurement noise, however, at the cost of large latency or reduced number of challenges. Other complex implementations have been proposed to mitigate stability issues that often induce lower hardware efficiency \cite{Maiti2009}, additional circuit complexity \cite{Rahman2014}, or making the PUF more susceptible to attacks \cite{Jeroen2014}. Also, to protect PUFs from the worst case scenario as described creates a huge overhead as it requires to employ very strong ECC \cite{Xinmiao2015}.

\vspace{-0.1in}
\subsection{Quantifiable Security Evaluation Model}
In order to impersonate the hardware, the PUF attacker needs to
respond to a challenge with a correct response (\emph{i.e.,} guess a
secret).  Comprehensive security models for PUFs are described in
\cite{Armknecht2016}, including a precise identification of required
PUF properties, such as indistinguishably and
tamper-resilience. Though this specifies the security requirements, as
a ``checklist'', we believe that a more quantitative assessment of PUF
security can be valuable for both PUF designers and PUF users.  Inter-
and intra- fractional Hamming Distance (FHD) \cite{Maiti2009}, and
other statistical tests for randomness \cite{Andrew10}, have been used
for quantifying PUF security.  Though it is reasonable that having
larger inter-FHD is more secure, it does not tell the PUF designer
how much more secure it is. For
example, is it worth raising the inter-FHD from 40\% to 49\% at a cost
of extra hardware? In this work we present a more principled way to analyze PUF
security by connecting it to how one could evaluate password security,
through a guesswork framework. We derive a theoretical framework for
PUF security evaluation that brings together two important
properties of existing PUFs: predictability and
reproducibility \cite{PUFStudy2010}. This framework enables a
unified security quantification of several effects including bias,
noise, and side-channels on PUFs, as well as the security over
multiple challenge-response pairs, providing design guidance by
quantifying the security level of a PUF.

One can think of PUF signatures
like passwords, and its breakability should be evaluated by how strong it is, for example, how many attempts (on average) does it
  take to compromise it. This \emph{guessing} framework has been
  studied in the information theory literature and has recently been
  adopted by NIST as a measure of password security \cite{NISTPassword}. We bring this framework to evaluating the security of PUFs.
\vspace{-0.1in}
\subsection{The LEDPUF}
The issues of parametric PUFs, such as the described instability, wearout, measurement noise, limited local variation, and limited side channel attack resiliency, clearly motivate the need to design PUFs that do not rely on parametric performance variations as the entropy source. 


Rather than comparing parameter deviations, the response of an LEDPUF is stability-guaranteed because it depends on the local randomness enhanced through locally manipulated physical layout designs. We leverage intrinsic randomness in fabrication processes to generate hard defects, such as random permanent connections generated in Directed Self Assembly (DSA) process, which is highly compatible with CMOS technology and is expected to be used in manufacturing in the near future \cite{ITRS}.


Compared to similar parametric PUFs such as hardware obfuscation \cite{JamesB14} or digital PUFs \cite{TengXu14}, LEDPUF is completely stable and less susceptible to side channel attacks or model building attacks. The proposed LEDPUF is also a functional PUF where logic function itself is the signature and the strong LEDPUF can generate a variety of challenge-response pairs as needed. The Boolean nature of the response without any parametric dependence means that LEDPUF is not only immune to measurement noise and wearout, but also offers a greater level of reliability compared to existing PUFs as the output is resistant to changes in the environmental factors.

\subsection{The Contributions of this Paper}
The contributions of this paper are:
\begin{itemize}
\item The first stability-guaranteed silicon PUF through locally enhanced defectivity is proposed.
\item Detailed constructions of the weak and strong LEDPUF using random DSA connections are presented. LEDPUF is the first PUF with 0\% intra-distance without using any stability enhancement techniques.
\item We present a new unified framework for evaluating the level of security of PUFs through guesswork analysis. This framework enables us to evaluate and quantify the effect of noise, bias and model attacks on the level of security. We also relate guesswork to other security measures such as min-entropy, and mutual information. The model quantitatively measures the security of various PUFs under different scenarios, and by doing so enables us to compare the security level of different sorts of PUFs.
\item Security of noisy SRAM PUF and RO PUF implemented on real chips and Field Programmable Gate Arrays (FPGAs) are compared with a stable LEDPUF based on the new model.
\item Guesswork is derived for three new problems: Guesswork under probability of attack failure; the guesswork of strong PUFs under model building attacks; and the guesswork of an idealized version of a message authentication code.
\end{itemize}

The rest of the paper is organized as follows: In Section \ref{DSA_ran}, we describe the local randomness extraction from DSA. In Section \ref{LEDPUF_cons}, the structure of the stable signal unit (SSU) is first described, followed by the construction of the proposed weak LEDPUF from the SSU. The unified guesswork framework is presented in Section \ref{sec:GuessworkFramwork}. In Section \ref{exp}, the evaluation results of weak LEDPUF and nosiy weak PUFs are presented along with a derivation of the guesswork under  probability of attack failure. The construction of strong LEDPUF, evaluation of its security, as well as a derivation of the guesswork of message authentication codes is presented in Section \ref{sec:GWStrongLEDPUF}. Finally, we conclude the paper in Section \ref{conclusion}.



\vspace{-0.1in}
\section{DSA Randomness Extraction} \label{DSA_ran}
We propose DSA as one of the sources of stable randomness for LEDPUF. DSA is a mechanism by which immiscible block copolymers (BCP) phase-separate into certain structures \cite{xu2009self}. The guiding templates, which are used to {\em guide} the self-assembly process \cite{jarnagin_high_2013}, can be lithographically-printed trenches (Graphoepitaxy) or chemically-treated surfaces (Chemoepitaxy). Once a guiding template is printed, its surface is spin-coated with the BCP solution. The phase separation occurs during the thermal annealing where thermal equilibrium is achieved when the free energy is minimized. With a particular BCP and surface treatment of substrate \cite{kim2013interplay}, cylinders are formed of one block in a matrix of the other block \cite{yasmine2015}.

Since the minimum energy state strongly depends on the level of confinement achieved by the layout of guiding templates, for bigger-sized templates, it becomes energetically less expensive to induce a defect than to achieve a defect-free energy minimization \cite{black_polymer_2007, kang2011degree, sundrani2004hierarchical}. Therefore, final assembly results can be random by designing guiding templates that are large enough to cause random assembly errors even if there are no lithographic variations \cite{Wang16}. Figure \ref{fig:DSA1} shows three simulation results of the same large guiding template with an existing DSA simulator \cite{Brandon2014}, where the model of the PS-b-PMMA copolymer has been validated in \cite{Detcheverry08}. The three layers inside the polygonal guiding template are the top, middle, and bottom layers of a via. If a cylindrical via hole is formed correctly, the three layers should be three overlapped concentric circles. However, for the large guiding template, random arrangement with different orientations begin to occur. In other words, the randomness of DSA is confined within predetermined local areas only by deliberately designing "bad" guiding templates.

\begin{figure}[h]
\begin{centering}
\includegraphics[width=3.0in]{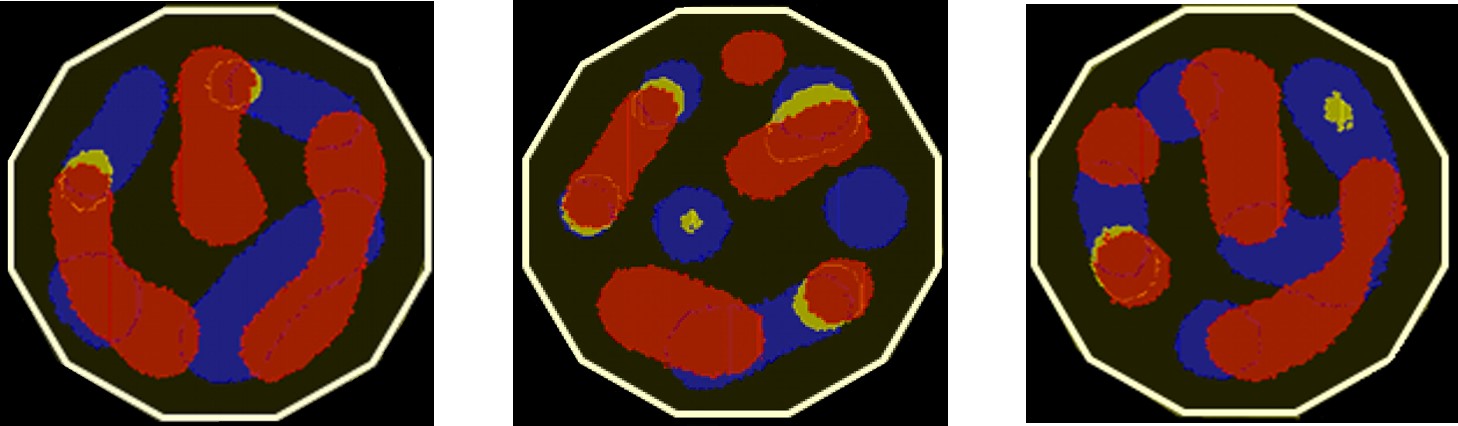}
\par\end{centering}
\caption{Random via formations with a same large guiding template. }
\label{fig:DSA1}
\end{figure}

\vspace{-0.1in}
\subsection{Hard Defective Connection Formation} \label{d_via}
We leverage the randomness extracted from DSA to form randomly assembled connections, and these connections are then used to fabricate LEDPUF. Though in conventional DSA, the goal of the guiding template design is to achieve high confinement and avoid regions of random phase transitions, we use the same theory but to enhance randomness in assembly. To construct a DSA random hard defective connection, we configure the size of the guiding template so that two vias are formed with a certain probability that they are connected permanently. A DSA hard defective connection is composed of the two vias along with the connection. 

In our experiment, each simulation contains three guiding templates with a same shape, and two vias are formed in each of the guiding template. If the via pair in a same guiding template is merged, the DSA hard defective connection is in closed state; otherwise, the connection is in opened state. In our statistical analysis, an open state is represented by a logic ``1'', and a closed state is represented by a logic ``0''. 500 simulation were performed in our experiments, so 1500 bits of raw data is obtained from the simulation. Based on our simulated data, the empirical entropy of triples of bits is only 0.0063 bits smaller than the entropy of independent triple of bits, which implies that the states of the three connections in a simulation is independent with each other as expected in real DSA process \cite{Detcheverry08}. Examples of a simulation result in 2D and 3D views are shown in Figure \ref{fig:D1} (a) and (b), respectively. In the 2D view, the rectangular shapes with rounding corners are the guiding templates, and the shapes inside of the guiding templates are the vias. If the via pair in a same guiding template is merged, the DSA hard defective connection is formed as shown in Figure \ref{fig:D1} (c), and it is in permanent closed state; otherwise, the DSA hard defective connection is in permanent opened state as shown in Figure \ref{fig:D1} (d). In Figure \ref{fig:D1} (a) and (b), two hard defective connections are in opened state, and one connection is in closed state.

\begin{figure}[h]
\begin{centering}
\includegraphics[width=3.0in]{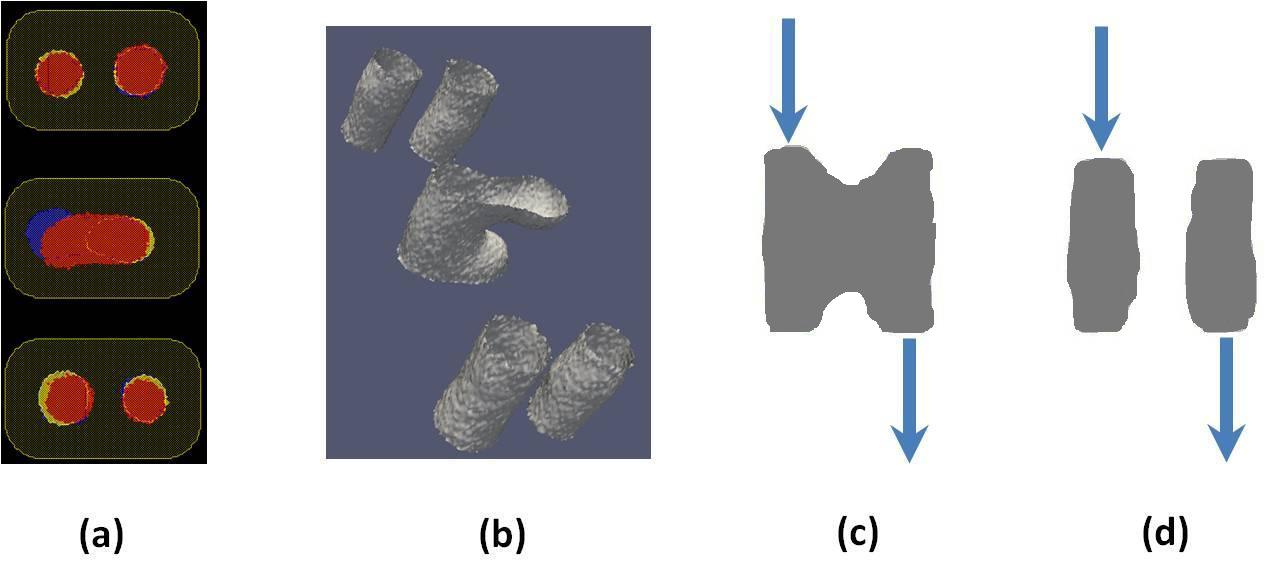}
\par\end{centering}
\caption{(a) 2D view of 3 DSA hard defective connections. (b) 3D view of 3 DSA hard defective connections. (c) Vias are partially merged, so the connection is in permanent closed state. (d) Vias are not merged, so the DSA hard defective connection is in permanent opened state.}
 \label{fig:D1}
\end{figure}

\vspace{-0.1in}
\section{Weak LEDPUF Construction} \label{LEDPUF_cons}
The proposed weak LEDPUF is composed of arrays of SSUs. Each SSU is constructed from a DSA defective connection, which can be considered as random switches with permanent states that determine the unique and stable function of the circuit. Figure \ref{fig:r3} (a) shows the implementation of a SSU. Two ends of the DSA connection are connected to VDD and GND through opposite switches. Figure \ref{fig:r3} (b) shows the abstraction of a SSU. In standby mode or before the evaluation, the evaluation signal EVA is low and the output is zero. During evaluation mode, EVA becomes high, and the output is either one or zero depending on the permanent state of the DSA connection. If the DSA connection is closed, the output is one; otherwise, the output is zero.

\begin{figure}[h]
\begin{centering}
\includegraphics[width=3.2in]{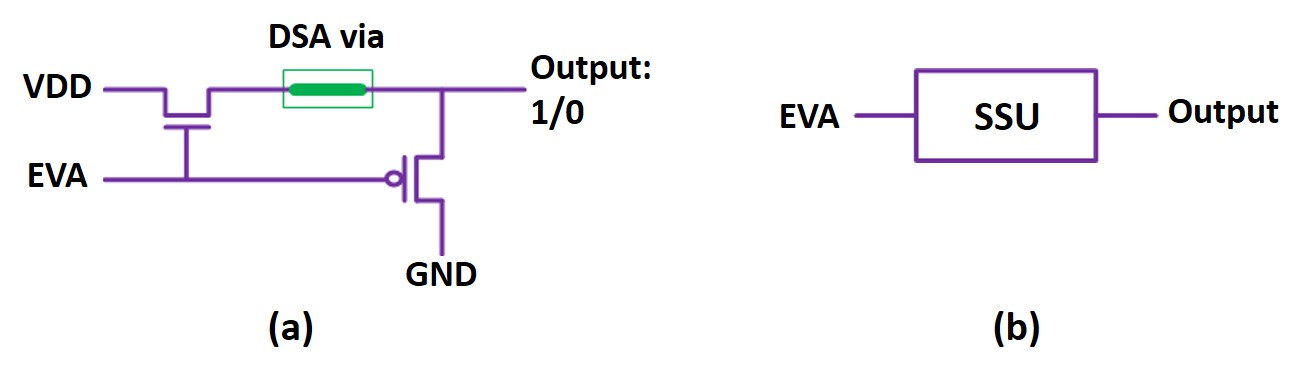}
\par\end{centering}
\caption{(a) Stable signal unit implementation. When EVA is high, the output is either one or zero permanently depending on the state of the DSA via. (b) Abstraction of a SSU.}
\label{fig:r3}
\end{figure}

The proposed weak LEDPUF is constructed by arranging the SSUs in forms of arrays. Figure \ref{fig:ssu_array} illustrates a weak LEDPUF with $n$ rows and $m$ columns, where the number of SSUs is $nm$, and the number of CRPs is $n$. Since only one of the rows is being evaluated at a time, a one-hot decoder is used so that only one bit of the EVA vector is logic 1. The challenge fed into the decoder is a $log(n)$-bit input, and the response is a $m$-bit output. 

\begin{figure}[h]
\begin{centering}
\includegraphics[width=3.3in]{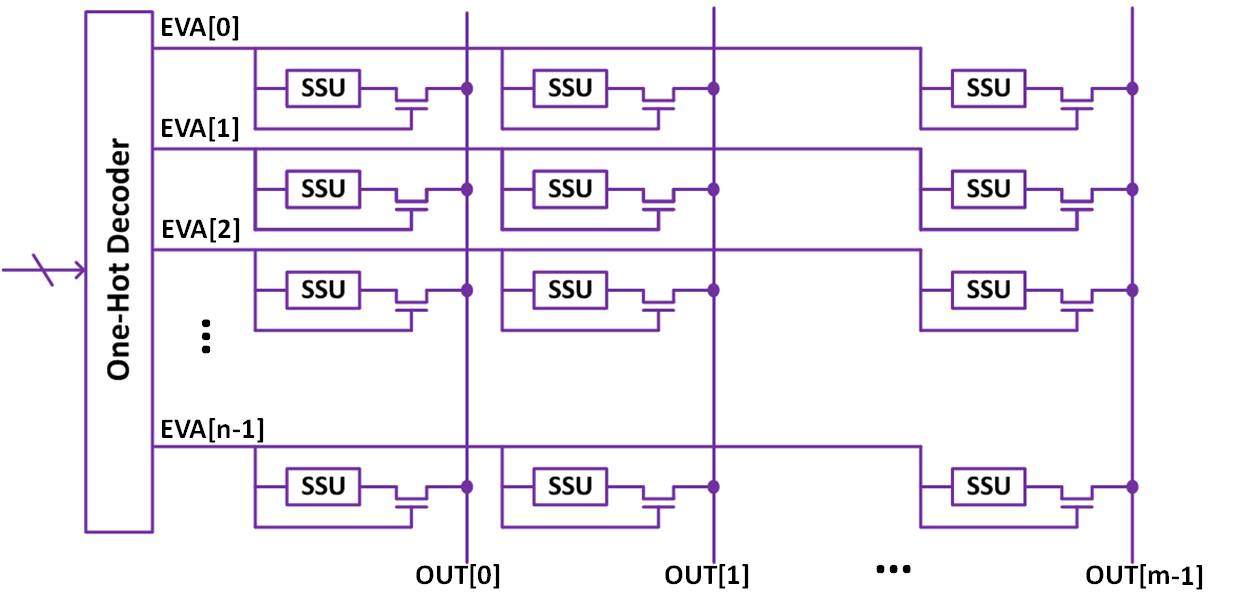}
\par\end{centering}
\caption{A weak LEDPUF with $n$ challenges and $m$-bit response. Only one bit of the EVA vector is logic 1 at a time.}
\label{fig:ssu_array}
\end{figure}

Compared with existing weak SRAM PUFs, the weak LEDPUF has several evident advantages:
\begin{itemize}
\item It is completely stable, so it has no area or latency overhead. To generate a bit response, the weak LEDPUF requires only one SSU and a transistor, or 3  transistors equivalently, as for a standard SRAM cell, 6 transistors are required. Once the state of the DSA via is determined, the output is fixed permanently, so no additional ECC, fuzzy extraction, or helper data is needed. As stated in \cite{Guajardo07}, for a SRAM PUF to generate a 128-bit response, more than 4k SRAM cells are needed under a condition with 15\% bit error probability. Therefore, the total number of transistors needed for SRAM PUF to generate a 128-bit response would be 24k, where for the weak LEDPUF, only 384 transistors are needed, which is more than 600x less than a SRAM PUF, thus the area is also much smaller even assuming that the hardware cost of the peripheral circuits are similar.

\item In addition to model building attacks \cite{Herder2014}, the weak LEDPUF is also more resistant to existing attacks to SRAM PUFs, such as laser stimulation \cite{Nedospasov2013} or Photonic Emission Analysis (PEA) \cite{Helfmeier2013}. The laser stimulation attack focuses on retrieving the on/off state of transistors, but for weak LEDPUF, the states of the transistors, which depend on the EVA signal, do not reveal secret information. The PEA attack does not work effectively because for each SSU, the source voltage (VDD) of the NMOS is always higher than the drain voltage, and the PMOS at the output will not stay in saturation region since the output will be pulled down even if the DSA connection is formed.
\end{itemize}

\vspace{-0.1in}
   
\section{Guesswork as A Unified Framework for Evaluating The Level of Security}\label{sec:GuessworkFramwork}
\subsection{Why Consider Guesswork?}\label{subsec:WhyConsiderGuesswork}
Consider the following game: Bob draws a sample $x$ from a random
variable $X$, and an attacker Alice who does not know $x$ but knows
the probability mass function $P_{X}\pa{\cdot}$, tries to guess it. An
oracle tells Alice whether her guess is right or wrong. This is the
situation where an attacker tries different passwords to access an account.

If Alice has only one guess, then the optimal strategy that maximizes
her chance of guessing $x$ successfully is choosing the most probable
$x$. In this case the chance of guessing $x$ is $\max_{x^{'}\in
  X}P_{X}\pa{x^{'}}$ and the \emph{predictability} of $X$ is given by
its min-entropy \cite{FuzzyExt2007}:
\beq{} {
  H_{\infty}\pa{P_{X}}=-\log_{2}\pa{\max_{x^{'}\in X}P_{X}\pa{x^{'}}}.  }

If Alice is allowed to make as many guesses as required until she
finds $x$, then the optimal strategy is guessing elements in $X$ based
on their probabilities in ascending order \cite{Arikan_Ineq_Guessing}.  It has
been shown that the average number of guesses it takes Alice to find
$x$ (denoted by $G\pa{X}$ and termed guesswork) is not given by the
traditional Shannon entropy \cite{Arikan_Ineq_Guessing}.  For example, when
drawing a random vector of length $m$, $\underline{X}$, which is
independent and identically distributed (i.i.d.) with distribution
$P=[p_1,\ldots,p_n]$, the exponential growth rate of the guesswork
scales according to the Renyi entropy $H_{\alpha}\pa{X}$ with parameter $\alpha=1/2$
\cite{Arikan_Ineq_Guessing}:  
\beq{eq:GuessworkExpGrowthRate} 
{
\lim_{m\to\infty}\frac{1}{m}\log_{2}\pa{E\pa{G\pa{X}}}=
 H_{1/2}\pa{P}=2\cdot\log_{2}\pa{\sum_{i}p_{i}^{1/2}} } where
$H_{\frac{1}{2}}\pa{P}\ge H\pa{P}$ with equality only for the uniform
probability mass function. 


The security of a PUF is predicated by the inherent random signature
in the hardware. An attacker wants to either impersonate a hardware by
guessing its random signature, or to learn it by eavesdropping. In order to impersonate the hardware, the attacker
needs to respond to a challenge with a correct response. In order to
evaluate the security of a PUF we connect to the framework for password security \cite{NISTPassword}. For a dictionary attack, a guessing framework quantifies security through the number of guesses the impersonator has to make in order to
identify the password (or inherent randomness) and therefore respond correctly to
all possible challenges. Therefore, we quantify the level of security of a PUF through the number of guesses required to break it.

In subsection \ref{subsec:GuessworkUnifiedFramework} we show that guesswork can serve as a unified framework for evaluating and quantifying the security of PUFs. Essentially, other measures of evaluating the level of security such as min-entropy and mutual information are special cases of guesswork. Furthermore, guesswork allows to quantify the level of security under more elaborated scenarios such as the security level when key stretching mechanism is used \cite{WagnerKeyStreatching} as well as when allowing an attack failure probability (this problem is presented in Subsection \ref{subsec:GuessworkUnifiedFramework}). 

Interestingly, characterizing the security of a PUF through guesswork reveals a new interplay between the bias of a PUF response, and the noise (due to instability) which is incorporated in each sample. Guesswork is very sensitive to the presence of instability, but yet is not very susceptible to bias. These properties are discussed in subsection \ref{subsec:MinEntVsGuesswork}. Therefore, guesswork highlights the advantage of stable PUFs over unstable PUFs, when evaluated through the number of guesses required to break a PUF. To the best of our knowledge this advantage has not been reported in literature. 


Moreover, we present a formal evaluation methodology for PUFs security, while identifying the impact of bias, noise and side-channels. 

\vspace{-0.1in}
\subsection{Background}\label{subsec:Background}
The guesswork $G\pa{X}$ is a random variable that represents the number of guesses required to guess a random variable $x$. Therefore, the probability of having $G\pa{x}$ guesses is $P_{X}\pa{x}$. The $\rho$th moment of guesswork is 
\beq{eq:BasicGuesswork}{
E\pa{G\pa{X}^{\rho}}=\sum_{x}G\pa{x}^{\rho}\cdot p_{X}\pa{x}.
}

The definition of guesswork can be extended to the case where the attacker has a side information $Y$ available. In this case the average guesswork for $Y=y$ is defined as $G\pa{X|Y=y}$, and the $\rho$th moment of $G\pa{X|Y}$ is
\beq{eq:ConditionalGuesswork}{
E\pa{G\pa{X|Y}^{\rho}}=\sum_{y}E\pa{G\pa{X|Y=y}^{\rho}}\cdot p_{Y}\pa{y}.
}

Massey \cite{Messy1994} noted that a dictionary attack minimizes the expected number of guesses (i.e., guessing the values in the decreasing order of $P_{X}\pa{x}$). Arikan \cite{Arikan_Ineq_Guessing} has bounded the $\rho$th moment of the optimal guesswork, $G^{\ast}\pa{X|Y}$, by 
\bal{eq:GeneralExp}{
\pa{1+\ln\pa{M}}^{-\rho}\sum_{y}\pa{\sum_{x}P_{X,Y}\pa{x,y}^{\frac{1}{1+\rho}}}^{1+\rho}
\le \nn\\
E\pa{G^{\ast}\pa{X|Y}^{\rho}}\le \sum_{y}\pa{\sum_{x}P_{X,Y}\pa{x,y}^{\frac{1}{1+\rho}}}^{1+\rho}
}
where $M=|X|$ is the cardinality of $X$. Furthermore, in \cite{Arikan_Ineq_Guessing} it has been shown that when $X$ and $Y$ are \iid, the exponential growth rate of the optimal guesswork is
\beq{eq:asymptiticconditionalguess}{
\lim_{m\to\infty}\frac{1}{m}\log_{2}\pa{E\pa{G^{\ast}\pa{X|Y}^{\rho}}}=\rho\cdot H_{\frac{1}{1+\rho}}\pa{P_{X,Y}\pa{x,y}}
}
where $m$ is the size of $X$ and $Y$, and  
\beq{}{
H_{\frac{1}{1+\rho}}\pa{P_{X,Y}\pa{x,y}}=\frac{1}{\rho}\log_{2}\pa{\sum_{y}\pa{\sum_{x}P\pa{x,y}^{\frac{1}{1+\rho}}}^{1+\rho}}} is Renyi's conditional entropy of order $\frac{1}{1+\rho}$ \cite{Arikan_Ineq_Guessing}.

Another extension of guesswork \cite{MerhavArikanDist} considers a game in which it is sufficient to guess $x$ up to a certain level of distortion $D$, according to some distance metric $d\pa{x,\hat{x}\pa{i}}=\sum_{i=1}^{m}d\pa{x_{i},\hat{x}\pa{i}}$. Essentially, when $G\pa{x}=i$, the word $\hat{x}\pa{i}$ is guessed such that $d\pa{x,\hat{x}\pa{i}}\le m\cdot D$, that is, when the attacker guesses a word which is within a Hamming distance $m\cdot D$ of $x$ the game is over. The authors in \cite{MerhavArikanDist} have solved this problem for the general case; more specifically, for a binary source which is drawn i.i.d.  Bernoulli$\pa{p}$ and Hamming distortion
\beq{}{
d\pa{x_{j},\hat{x}_{j}}=\barr{cc}{
1 & x_{j}=\hat{x}_{j}\\
0 & x_{j}\neq\hat{x}_{j}
}
}
where $1\le j\le m$, they have shown that the exponential growth rate of the guesswork equals
\bal{eq:BoundsAsymptoticCondGesswork}{
\lim_{m\to\infty}\frac{1}{m}\log_{2}\pa{E\pa{G^{\ast}\pa{X,D}^{\rho}}}=\rho\cdot E\pa{D,p}=\nn\\ \max\pa{\rho H_{\frac{1}{1+\rho}}\pa{p}-\rho\cdot H\pa{D},0}
}
where $H\pa{D}=-D\log_{2}\pa{D}-\pa{1-D}\log_{2}\pa{1-D}$ is the binary Shannon entropy \cite{CoverBook}.


Guesswork has been analyzed in many other scenarios such as guessing under source uncertainty with and without side information \cite{Sunderesan07}, \cite{SunderesanGuessSourceUncertaintySideInfo}, using guesswork to lower bound the complexity of sequential decoding \cite{Arikan_Ineq_Guessing}, guesswork for Markov chains \cite{MaloneGuessworkandEntropy}, and guesswork for multi-user systems \cite{Duffy15}.

\vspace{-0.2in}
\subsection{Extending Guesswork }\label{subsec:GuessworkUnifiedFramework}
In this subsection we extend the definition of guesswork and show that it can serve as a unified framework for evaluating the level of security of PUFs by incorporating noise and bias. In addition, we relate guesswork to other measures  such as mutual information and min-entropy.

We begin by finding the moments of the guesswork of a noisy weak PUF.
\begin{theorem}\label{th:TheoryForGuessworNoisyPUF}
When the response of a weak PUF is noisy such that the noise is additive and drawn i.i.d. Bernoulli$\pa{D}$, and the original response is i.i.d. Bernoulli$\pa{p}$ , the $\rho$th moment of the guesswork increases at rate $\rho\cdot E\pa{D,p}$ as defined in \eqref{eq:BoundsAsymptoticCondGesswork}.
\end{theorem}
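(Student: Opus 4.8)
The plan is to recognize impersonating a noisy weak PUF as exactly the distortion-constrained guessing game of \cite{MerhavArikanDist}, and then read off the exponent from \eqref{eq:BoundsAsymptoticCondGesswork}. Model the PUF by a ground-truth response $\underline{x}\sim\text{Bernoulli}\pa{p}^{\otimes m}$, so that each physical read-out returns $\underline{x}\oplus\underline{n}$ with $\underline{n}\sim\text{Bernoulli}\pa{D}^{\otimes m}$ drawn afresh each time. Because any usable verifier must accept the legitimate (noisy) read-outs, and because the noise weight concentrates, $\lv\underline{n}\rv/m\to D$ by the law of large numbers, the set of words the verifier declares a match for $\underline{x}$ is, after accounting for the measurement noise, effectively the Hamming ball of radius $mD$ about $\underline{x}$ (e.g., a verifier holding the enrolled $\underline{x}$ uses tolerance $\approx mD$; a verifier comparing two noisy reads uses tolerance $\approx 2D\pa{1-D}m$, which after removing one noise term again leaves effective radius $mD$). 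Hence the number of trials an optimal impersonator needs is precisely the distortion-$D$ guesswork $G^{\ast}\pa{\underline{x},D}$: guess $\hat{\underline{x}}\pa{1},\hat{\underline{x}}\pa{2},\dots$ and stop at the first $i$ with $d\pa{\hat{\underline{x}}\pa{i},\underline{x}}\le mD$.

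First I would make the sub-linear slack explicit: to admit legitimate users with probability $1-o\pa{1}$ the verifier's threshold need exceed $mD$ by at most $O\pa{\sqrt{m}\log m}$, so it equals $t_{m}=m\pa{D+o\pa{1}}$, and the impersonator's guesswork is exactly $G^{\ast}\pa{\underline{x},t_{m}/m}$ with $t_{m}/m\to D$. Fixing $\epsilon>0$, for all large $m$ monotonicity of $G^{\ast}$ in the distortion level gives $G^{\ast}\pa{\underline{x},D+\epsilon}\le G^{\ast}\pa{\underline{x},t_{m}/m}\le G^{\ast}\pa{\underline{x},D-\epsilon}$; applying \eqref{eq:BoundsAsymptoticCondGesswork} at the fixed levels $D\pm\epsilon$ then yields $\rho\cd E\pa{D+\epsilon,p}\le\liminf_{m}\tfrac{1}{m}\log_{2}E\pa{G^{\ast}\pa{\underline{x},t_{m}/m}^{\rho}}\le\limsup_{m}\tfrac{1}{m}\log_{2}E\pa{G^{\ast}\pa{\underline{x},t_{m}/m}^{\rho}}\le\rho\cd E\pa{D-\epsilon,p}$. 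Letting $\epsilon\to 0$ and using that $D\mapsto E\pa{D,p}$ is continuous (it inherits continuity from the binary entropy $H\pa{\cd}$) collapses both sides to $\rho\cd E\pa{D,p}$, which is the claim.

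Then I would record the two regimes hidden in the $\max$ of \eqref{eq:BoundsAsymptoticCondGesswork}. When $H\pa{D}<H_{1/(1+\rho)}\pa{p}$ the exponent is the strictly positive $\rho\pa{H_{1/(1+\rho)}\pa{p}-H\pa{D}}$, i.e., instability strictly reduces the guessing exponent below the noiseless value $\rho H_{1/(1+\rho)}\pa{p}$ of \eqref{eq:asymptiticconditionalguess} by exactly $\rho H\pa{D}$. When $H\pa{D}\ge H_{1/(1+\rho)}\pa{p}$ the exponent is $0$: already the single $p$-typical (all-majority-symbol) word lies within Hamming distance $mD$ of $\underline{x}$ with probability $1-o\pa{1}$, so $O\pa{1}$ guesses break the PUF --- the quantitative sense in which guesswork is ``very sensitive to instability.''

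The main obstacle is not a hard computation but the modeling step: arguing that the effective distortion is \emph{exactly} $D$. This requires (i) taking the verifier threshold to be the minimal value that still admits legitimate users --- a looser threshold $\alpha m$ with $\alpha>D$ would instead give the smaller exponent $\rho\cd E\pa{\alpha,p}$, so the theorem is really a statement about a noise-tight verifier; (ii) checking that across the natural verification protocols (enrolled-vs-noisy, noisy-vs-noisy) the effective radius always reduces to $mD\pa{1+o\pa{1}}$ via concentration of the Binomial noise weight; and (iii) verifying that this $o\pa{1}$ slack cannot perturb the exponent, which is exactly the continuity-plus-sandwich argument above. A minor secondary point is to invoke \eqref{eq:BoundsAsymptoticCondGesswork} as a genuine two-sided limit, so that both an explicit ball-covering guessing order (achievability) and the converse transfer through the sandwich.
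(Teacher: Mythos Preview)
Your argument is correct but takes a different route from the paper. The paper frames the problem around the helper-data/fuzzy-extractor layer that noisy weak PUFs actually deploy: the attacker sees the public helper data and aims to recover the original response $\underline{x}$ exactly. It then compares two attack strategies---(i) guess $\underline{x}$ up to Hamming distance $mD$ as in \cite{MerhavArikanDist} and decode to $\underline{x}$ with the helper data, and (ii) use the helper data (a coset of an ECC) to restrict to a subset of size $2^{(1-H(D))m}$ and guess within it---and observes that both yield rate $\rho\cdot E\pa{D,p}$, which pins down the minimum. You instead model the verifier directly as a distortion-$D$ acceptor, so impersonation \emph{is} the Merhav--Arikan game from the outset, and the exponent follows from \eqref{eq:BoundsAsymptoticCondGesswork} plus your continuity/sandwich step to absorb the $o(1)$ threshold slack.

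What each buys: the paper's argument speaks to the actual attack surface (public helper data) and does not need to posit a noise-tight verifier threshold; its lower bound comes from the coset-size count rather than from invoking the converse half of \cite{MerhavArikanDist}. Your reduction is cleaner and handles the sub-linear slack explicitly, but it rests on the protocol assumption that the acceptance region is (up to $o(m)$) the radius-$mD$ ball. One caveat: your noisy-vs-noisy aside does not in fact reduce to effective radius $mD$. If the verifier holds $\underline{x}\oplus\underline{n}_{1}$ and uses threshold $\approx 2D(1-D)m$, the attacker is guessing a Bernoulli$\pa{p+D-2pD}$ target at distortion $2D(1-D)$, which gives exponent $\rho\cdot E\pa{2D(1-D),\,p+D-2pD}$, strictly smaller than $\rho\cdot E\pa{D,p}$ in general; there is no ``removing one noise term'' step available to the attacker. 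I would drop that variant and state your reduction for the helper-data (equivalently, noiseless-enrollment with noise-tight threshold) protocol only, which is also the setting the paper has in mind.
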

\begin{proof}
The idea behind the proof is that guessing within Hamming distance $m\cdot D$ of the original response, enables the attacker to find the original response by using the helper-data.

Essentially there are two options. The first possibility is to construct a code to guess the original response up to Hamming distance $m\cdot D$ as is done in \cite{MerhavArikanDist}, and then use the helper-data in order to find the original response, in which case the rate of the $\rho$th moment is $\rho\cdot E\pa{D,p}$. The second possibility is to use the helper data (e.g., the coset of an ECC) to guess over a subset. In this case, since the helper-data  breaks up the space into subspaces of size $2^{\pa{1-H\pa{D}}\cdot m}$ \cite{CoverBook}, guessing through the subspace can only bring the rate of the $\rho$th moment down to $\rho\cdot E\pa{D,p}$. Therefore, the minimal rate is $\rho\cdot E\pa{D,\rho}$.  
\end{proof}

Before we present a new game that extends the traditional definition of guesswork, let us define the type of a vector.
\begin{defn}
Consider a binary vector $x$ of size $m$ and assume that $N\pa{x|1}$ is the number of elements of this vector that are equal to 1. In this case when $N\pa{x|1}/m=q$ the vector $x$ is of type $q$.  
\end{defn}

We now define a new guessing game that captures different measures for evaluating the level of security.
\begin{defn}[Guesswork under attack failure constraint]
Consider the following game: Bob draws a vector $x$ of size $m$ i.i.d. Bernoulli$\pa{p}$. The attacker Alice 
has to guess $x$ up to Hamming distance $m\cdot D$ as defined in subsection \ref{subsec:Background}, under the constraint that the probability of attack failure is smaller than or equal to $2^{-\Al\cdot m}$ where $\Al\ge 0$, that is, Alice may decrease the number of guesses by guessing only a subset of all possible words, which leads in turn to a certain probability of attack failure. We define the optimal guesswork for this game as $G^{\ast}\pa{X;D,\Al}$. Furthermore, we define the guesswork in the case where the probability of attack failure is zero as $G^{\ast}\pa{X;D,\infty}=G^{\ast}\pa{X;D}$.  
\end{defn}

\begin{rem}
The relation between $G^{\ast}\pa{X;D,\Al}$ and previous works is as follows: 
\begin{itemize}
\item $Pr\pa{G^{\ast}\pa{X;0,\infty}=1}=2^{-m\cdot H_{\infty}\pa{p}}$, that is, the min-entropy. 
\item $\lim_{m\to\infty}\frac{1}{m}\log\pa{E\pa{G^{\ast}\pa{X;D,\infty}^{\rho}}}=\rho\cdot E\pa{D,p}$ as defined in \eqref{eq:BoundsAsymptoticCondGesswork}.
\end{itemize}
\end{rem}

The following theorem characterizes a lower bound for $G^{\ast}\pa{X;D,\Al}$ for any moment $\rho >0$ in the case where the attacker is allowed not to guess certain types.
\begin{theorem}\label{th:GWWithErrors}
\baln{}{
&\lim_{m\to\infty}&\frac{1}{m}\log\pa{E\pa{G^{\ast}\pa{X;D,\Al=D\pa{s||p}}^{\rho}}}\nn\\
&\le&\barr{cc}{\rho\cdot\pa{ H_{\frac{1}{1+\rho}}\pa{p}- H\pa{D}} &s^{\ast}\le s\le 1\\
    \rho\cdot\pa{ H\pa{s}-H\pa{D}}-D\pa{s||p} &p<s\le s^{\ast}
}
}
$0\le D\le p\le 1/2$, and 
\baln{}{
&\lim_{m\to\infty}&\frac{1}{m}\log\pa{E\pa{G^{\ast}\pa{X;D,\infty}^{\rho}}}\nn\\
&=&\rho\cdot\pa{H_{\frac{1}{1+\rho}}\pa{p}-H\pa{D}}
}
where $s^{\ast}=\frac{p^{\pa{1+\rho}^{-1}}}{p^{\pa{1+\rho}^{-1}}+\pa{1-p}^{\pa{1+\rho}^{-1}}}$, the probability of attack failure decreases like $2^{-m\cdot \Al}$, \beqn{}{
D\pa{s||p}=s\cdot\log_{2}\pa{s/p}+\pa{1-s}\cdot\log_{2}\pa{\pa{1-s}/\pa{1-p}}
}
is the Kullback-Leibler divergence \cite{CoverBook}, and Alice chooses a set $A=\pac{q_{1},\dots,q_{L}}$ of types whose vectors are not guessed, such that the probability that $N\pa{x|1}/m\in A$ is smaller than or equal to $2^{-\Al\cdot m}$, that is, Alice guesses words in $A^{C}$.
\end{theorem}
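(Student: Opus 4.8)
The plan is to prove the bound by exhibiting an explicit guessing strategy for Alice and then evaluating the induced $\rho$th moment with the method of types. Fix $\alpha=D\pa{s||p}$ with $p<s\le 1$. Since $D\pa{\cdot||p}$ is strictly decreasing on $\paq{0,p}$ and strictly increasing on $\paq{p,1}$, the set of types $q$ with $D\pa{q||p}<\alpha$ is an open interval $\pa{s',s}$ with $s'<p<s$ and $D\pa{s'||p}=D\pa{s||p}=\alpha$. Alice's strategy is to take $A=\pac{q:\,D\pa{q||p}\ge\alpha}$ as the set of omitted types and to guess words only from types in $A^{C}=\pa{s',s}$. First I would verify the failure constraint: there are only $m+1$ types and each $q\in A$ has $\Pr\paq{\mrm{type}=q}\doteq 2^{-mD\pa{q||p}}\le 2^{-m\alpha}$, so $\Pr\paq{N\pa{x|1}/m\in A}\le\pa{m+1}2^{-m\alpha}$ and the attack-failure probability indeed decays at rate $\alpha$.

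Next, for each type $q\in A^{C}$ Alice covers the corresponding type-class by a minimal Hamming-$D$ covering code, of rate $\paq{H\pa{q}-H\pa{D}}_{+}$ (whose existence is the type-covering lemma underlying \cite{MerhavArikanDist}), and she concatenates these codebooks in order of increasing $D\pa{q||p}$. With this ordering a word of type $q\in A^{C}$ is guessed within the first $\doteq 2^{m g_{A}\pa{q}}$ guesses, where $g_{A}\pa{q}=\max\pac{\paq{H\pa{q'}-H\pa{D}}_{+}:\,q'\in A^{C},\ D\pa{q'||p}\le D\pa{q||p}}$. Since the number of type-$q$ words is $\doteq 2^{mH\pa{q}}$ and each such word has probability $2^{-m\pa{H\pa{q}+D\pa{q||p}}}$, the method of types then yields
\[
E\pa{G^{\ast}\pa{X;D,\alpha}^{\rho}}\doteq\sum_{q\in A^{C}}2^{m\pa{\rho g_{A}\pa{q}-D\pa{q||p}}}\doteq 2^{m\cdot\max_{q\in A^{C}}\pa{\rho g_{A}\pa{q}-D\pa{q||p}}}.
\]

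The remaining work is the optimization of that exponent. I would first show $\max_{q\in A^{C}}\pa{\rho g_{A}\pa{q}-D\pa{q||p}}\le\max_{q\in A^{C}}h\pa{q}$ with $h\pa{q}:=\rho\pa{H\pa{q}-H\pa{D}}-D\pa{q||p}$: if $q'$ attains the maximum defining $g_{A}\pa{q}$ then $D\pa{q'||p}\le D\pa{q||p}$ and $q'\in A^{C}$, so $\rho g_{A}\pa{q}-D\pa{q||p}\le\rho\paq{H\pa{q'}-H\pa{D}}_{+}-D\pa{q'||p}\le\max\pac{h\pa{q'},\,-D\pa{q'||p}}\le\max_{q\in A^{C}}h\pa{q}$, the last step using $p\in A^{C}$ and $h\pa{p}=\rho\pa{H\pa{p}-H\pa{D}}\ge 0$ because $D\le p\le 1/2$ (the choice $q'=q$ shows this is in fact an equality near the optimum). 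Now $h$ is strictly concave and $h'\pa{q}=\pa{1+\rho}\log\frac{1-q}{q}-\log\frac{1-p}{p}$ vanishes exactly at $s^{\ast}$, so $h$ is increasing on $\paq{0,s^{\ast}}$ and decreasing afterwards. If $s^{\ast}\le s$ then $s^{\ast}\in A^{C}$, hence $\max_{q\in A^{C}}h\pa{q}=h\pa{s^{\ast}}$, and a short computation (writing $a=p^{1/\pa{1+\rho}}$, $b=\pa{1-p}^{1/\pa{1+\rho}}$, $s^{\ast}=a/\pa{a+b}$) reduces this to $\rho\pa{H_{\frac{1}{1+\rho}}\pa{p}-H\pa{D}}$, the first branch. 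If $p<s\le s^{\ast}$ then $A^{C}=\pa{s',s}$ sits inside the increasing part of $h$, so the supremum over $A^{C}$ is attained in the limit at $q\to s^{-}$, giving $h\pa{s}=\rho\pa{H\pa{s}-H\pa{D}}-D\pa{s||p}$, the second branch (which agrees with the first at $s=s^{\ast}$). Finally, for $\alpha=\infty$ we have $A=\emptyset$ and $G^{\ast}\pa{X;D,\infty}=G^{\ast}\pa{X;D}$, so \eqref{eq:BoundsAsymptoticCondGesswork} gives exponent $\max\pa{\rho H_{\frac{1}{1+\rho}}\pa{p}-\rho H\pa{D},0}$; since Rényi entropy is non-increasing in its order, $H_{\frac{1}{1+\rho}}\pa{p}\ge H\pa{p}\ge H\pa{D}$ for $0\le D\le p\le 1/2$, so the first term attains the maximum.

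I expect the main obstacle to be the achievability bookkeeping behind $g_{A}\pa{q}$: one must carefully adapt the type-covering argument of \cite{MerhavArikanDist} to a \emph{restricted} family of types -- checking that deleting the type classes in $A$ and ordering the remaining covering codebooks by increasing $D\pa{\cdot||p}$ really produces guesswork exponent $g_{A}\pa{q}$ for type $q$, and that this is compatible with keeping the attack-failure rate at $\alpha$. Once that is set up, the type-counting in the second paragraph is routine and the optimization in the third paragraph is an elementary (if slightly fussy, because of the $\paq{\cdot}_{+}$ clipping and the branch boundary at $s^{\ast}$) convexity computation.
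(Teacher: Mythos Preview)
Your proposal is correct and follows essentially the same route as the paper: choose $A=\{q:D(q\|p)\ge\alpha\}$, use the method of types together with the Merhav--Arikan type-covering construction to reduce the $\rho$th-moment exponent to $\max_{q\in A^{C}}\bigl(\rho(H(q)-H(D))-D(q\|p)\bigr)$, and then read off the two cases from the concavity of this objective and its unique stationary point $s^{\ast}$. The paper's own proof is a very terse sketch that only states these ingredients; your version adds the explicit covering-code guessing order, the verification of the failure-probability rate, and the detailed optimization bookkeeping, all of which are elaborations rather than departures.
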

\begin{proof}
The proof is in Appendix \ref{sec:ProofThGWWithErrors}
\end{proof}
Note that $\rho\cdot H\pa{s^{\ast}}-D\pa{s^{\ast}||p}=\rho\cdot H_{\frac{1}{1+\rho}}\pa{p}$

The next three remarks point out a few properties of $G^{\ast}\pa{X;D,\Al}$.
\begin{rem}
When an attacker attempts to break a very large number of independent PUF responses (or passwords), where the probability of attack failure is $2^{-m\cdot\Al}$, he is very likely to break a fraction of $1-2^{-m\cdot\Al}$ of the PUF responses (passwords), and this in turn leads to average guesswork across PUF responses (passwords) that increases at a rate 
$\lim_{m\to\infty}\frac{1}{m}\log\pa{E\pa{G^{\ast}\pa{X;D,\infty}^{\rho}}}\le \rho\cdot\pa{H_{\frac{1}{1+\rho}}\pa{p}-H\pa{D}}=\rho\cdot E\pa{D,p}$, when $D\le p$.
\end{rem}
\begin{rem}
When $s=p+\epsilon$ the average guesswork is approximately $H\pa{p}-H\pa{D}$, which is the rate distortion function of Hamming distortion \cite{CoverBook}.
\end{rem}
\begin{rem}
Note that when $p=1/2$ also $s^{\ast}=1/2$ and the upper bound of Theorem \ref{th:GWWithErrors} is equal to $\lim_{m\to\infty}\frac{1}{m}\log\pa{E\pa{G^{\ast}\pa{X;D,\infty}^{\rho}}}$, that is, when guessing according to the method presented in Theorem \ref{th:GWWithErrors} Alice does not gain anything from having a failure probability larger than zero.
\end{rem}

We now derive an expression for the min-entropy when the attacker has to guess a word that is within Hamming distance $m\cdot D$ of the password. In this case the min-entropy of a binary i.i.d. source subject to Hamming distortion $D$ is equivalent to choosing a word which is in a ball of radius $m\cdot D$ around the most likely word (i.e., the probability of guessing a word which is in the most likely ball). The asymptotic value of the min-entropy subject to distortion $D$ is given by the following lemma.

\begin{lem}\label{lem:ProofofMinEntropysubjecttoDistortion}
Consider a binary word of length $m$ for which each element is drawn i.i.d. from Bernoulli$\pa{p}$. The min-entropy subject to Hamming distortion $D$ converges to   
\beq{}{
-\lim_{m\to\infty}\frac{1}{m}\log_{2}\pa{P_{X}^{ball}}=\barr{cc}{D\pa{D||p} & 0\le D\le p\\
0 & p<D\le 1} 
}
where $p\le 1/2$, $P_{X}^{ball}=\sum_{i=0}^{m\cdot D}\binom{m}{i} p^{i}\pa{1-p}^{m-i}$.
\end{lem}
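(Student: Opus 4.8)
The plan is to identify $P_{X}^{ball}=\sum_{i=0}^{\lfloor mD\rfloor}\binom{m}{i}p^{i}(1-p)^{m-i}$ with the lower-tail probability $\Pr\paq{\mathrm{Bin}(m,p)\le mD}$ of a binomial random variable and to pin down its exponential decay rate by the elementary ``largest-term'' method, obtaining matching exponential upper and lower bounds. Throughout I would use the standard Stirling estimate $\binom{m}{k}=2^{mH(k/m)+o(m)}$ \cite{CoverBook} together with the identity $H(D)+D\log_{2}p+(1-D)\log_{2}(1-p)=-D\pa{D||p}$.

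\emph{Regime $0\le D\le p$.} For the lower bound I would keep only the single summand $i=\lfloor mD\rfloor$, so that
\[ P_{X}^{ball}\ge\binom{m}{\lfloor mD\rfloor}p^{\lfloor mD\rfloor}(1-p)^{m-\lfloor mD\rfloor}=2^{mH(D)+mD\log_{2}p+m(1-D)\log_{2}(1-p)+o(m)}=2^{-mD\pa{D||p}+o(m)}. \]
For the upper bound I would note that the ratio of consecutive summands, $\frac{m-i}{i+1}\cdot\frac{p}{1-p}$, is at least $1$ whenever $i<(m+1)p-1$; since $D<p$ this forces the summands to be nondecreasing on $\pac{0,\dots,\lfloor mD\rfloor}$ for $m$ large, so the whole sum is at most $(\lfloor mD\rfloor+1)$ times its last term, and this polynomial prefactor is absorbed into the $o(m)$ in the exponent, giving $P_{X}^{ball}\le 2^{-mD\pa{D||p}+o(m)}$. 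Dividing by $m$, negating and letting $m\to\infty$ yields the limit $D\pa{D||p}$; the boundary value $D=p$ is consistent since $D\pa{p||p}=0$.

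\emph{Regime $p<D\le 1$.} Now the summation range contains $i=\lfloor mp\rfloor$, the mode of $\mathrm{Bin}(m,p)$, whose term is $\Theta(1/\sqrt{m})$ by Stirling; hence $P_{X}^{ball}\ge c/\sqrt{m}$ and $-\frac{1}{m}\log_{2}P_{X}^{ball}\le\frac{1}{2m}\log_{2}m+O(1/m)\to 0$. Since trivially $P_{X}^{ball}\le 1$, we also have $-\frac{1}{m}\log_{2}P_{X}^{ball}\ge 0$, so the limit is $0$ (in fact $P_{X}^{ball}\to 1$ here by the central limit theorem, but only the exponent is needed).

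I expect the only genuine obstacle to be bookkeeping: making the Stirling estimates uniform and dealing with the non-integrality of $mD$ and $mp$ so that every error term is truly $o(m)$ in the exponent and does not degenerate as $D\to 0$ or $D\to p$. A clean way to sidestep this is to invoke Cram\'er's theorem for i.i.d.\ Bernoulli$\pa{p}$ variables: the rate function controlling $\Pr\paq{\frac{1}{m}\sum_{i}X_{i}\le D}$ is $\inf_{q\le D}D\pa{q||p}$, which equals $D\pa{D||p}$ when $D\le p$ because $q\mapsto D\pa{q||p}$ is decreasing on $[0,p]$, and equals $0$ when $D\ge p$ --- precisely the statement of the lemma. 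I would present the largest-term computation as the main argument and mention the large-deviations viewpoint as a remark.
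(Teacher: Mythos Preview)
Your proposal is correct and follows essentially the same approach as the paper: the paper's proof invokes the method of types to identify the dominant term of the sum, reducing the exponent to $\min_{0\le\alpha\le D}D\pa{\alpha||p}$, which is exactly your largest-term argument with the monotonicity of $\alpha\mapsto D\pa{\alpha||p}$ on $[0,p]$ made explicit. Your treatment is more detailed (explicit ratio test, separate handling of the two regimes, the Cram\'er remark), but the underlying idea is identical.
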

\begin{proof}
The proof is in Appendix \ref{app:ProofofMinEntropysubjecttoDistortion}.
\end{proof}
\begin{rem}
Note that the result of Lemma \ref{lem:ProofofMinEntropysubjecttoDistortion} can also be interprested as $Pr\pa{G^{\ast}\pa{X;0,\infty}\le 2^{m\cdot H\pa{D}}}$ when a password of length $m$ is drawn i.i.d. Bernoulli$\pa{p}$, where $0 \le D\le 1/2$.
\end{rem}

\subsection{Examples for Quantifying the Security of PUFs}
In this subsection we present a few examples that illustrate how to use guesswork in order to quantify the level of security of PUFs. We address evaluations for unstable PUFs as well as for stable PUFs. We incorporate into the expressions noise, bias, and side information coming from other PUFs or from side channel/model attacks.

The first step in calculating the guesswork of a PUF is evaluating the probability function according to which it is drawn, as well as the noise level. In this subsection we assume that the bits are i.i.d. for which case the first step is evaluating the bias of the stable bits and then estimating the noise level of the unstable bits; evaluating the bias of the stable bits enables us to state that the PUF response is drawn i.i.d. from the probability function
\beq{}{
P_{0}=p\quad P_{1}=1-p
}
whereas the probability of transition of a bit when re-sampling a PUF is $q$, such that
\beq{eq:NoisyPUFExample}{
x^{\pa{2}}=x^{\pa{1}}\oplus e
}
where $x^{\pa{1}}$, $x^{\pa{2}}$ are the first and second samples of the unstable (noisy) PUF, and $Pr\pa{e_{j}=1}=q$, $1\le j\le m$.

For stable PUFs such as the one presented in this paper, it is sufficient to calculate the bias and assign the probability function to $\rho\cdot H_{\frac{1}{1+\rho}}\pa{P}$ in equation \eqref{eq:asymptiticconditionalguess}, in order to get the $\rho$th moment of guesswork. For example, when the stable PUF is drawn i.i.d. according to Bernoulli$\pa{0.47}$ the average guesswork of a PUF of large enough size ($m=256$, say)  is proportional to
\beq{}{
2^{H_{1/2}\pa{0.47}\cdot m}=2^{0.9987\cdot m }
}
whereas the largest guesswork that we can expect for is achieved by an unbiased PUF for which each bit is drawn i.i.d. Bernoulli$\pa{0.5}$, and is proportional to
\beq{}{
2^{H_{1/2}\pa{0.5}\cdot m}=2^{m}.
}

For unstable PUFs, re-sampling the PUF yields a noisy version of the original response as presented in equation \eqref{eq:NoisyPUFExample}. When the probability of transition is $q$, Theorem \ref{th:TheoryForGuessworNoisyPUF} shows us that it is sufficient to guess the original response $x^{\pa{1}}$ up to Hamming distance $m\cdot q$. The intra distance can be used to evaluate the noise level. For example, when considering an unbiased unstable PUF with a transition probability $q=0.1$, we get that the guesswork is proportional to 
\beq{}{
2^{m\cdot\pa{1-H\pa{0.1}}}=2^{0.531\cdot m}
}
which means that noise decreases the average number of guesses significantly.

The conditional guesswork \eqref{eq:ConditionalGuesswork} enables us to quantify the effect of side information on the security level of both stable and unstable PUFs. In order to evaluate the conditional guesswork we first need to characterize the conditional probability. The conditional probability depends on the type of attack which is being carried; in some cases characterizing its effect on the randomness of the response requires some effort. A simple example for a side information attack is one in which an attacker has another PUF which is correlated with the original one. For example, consider an unbiased stable PUF $x$ for which each element is drawn i.i.d. Bernoulli$\pa{0.5}$, and assume that an attacker has another unbiased stable PUF, $y$, which is correlated with $x$ such that 
\beq{}{
P\pa{y|x}=P\pa{e}
}
where $e$ is drawn i.i.d. Bernoulli$\pa{0.2}$. 
In this case the unconditional guesswork $G\pa{X}$ is proportional to 
\beq{}{
2^{H_{1/2}\pa{1/2}\cdot m}=2^{m}
}
whereas the conditional guesswork $G\pa{X|Y}$ is proportional to
\beq{}{
2^{H_{1/2}\pa{0.2}m}=2^{0.848\cdot m}
}
because of the fact that in this case $x$ given $y$ is also drawn i.i.d. Bernoulli$\pa{0.2}$. In general, the correlation between PUFs can be evaluated through the inter distance.

Conditional guesswork subject to distortion enables to evaluate the guesswork of an unstable PUF when side information is available. The method of evaluating the guesswork is similar to the previously mentioned methods for evaluating conditional guesswork and guesswork subject to distortion.

\subsection{The Effect of Noise Vs. The Effect of a Bias}\label{subsec:MinEntVsGuesswork}
In this subsection we analyze the expressions for guesswork as well as min-entropy subject to distortion, and quantify the impact that noise and bias have on PUFs. Furthermore, we show that the effect of noise is far worse than the effect of bias in terms of average guesswork.

First, let us focus on the effect of noise and bias on the expected value of the guesswork (i.e., the case where $\rho=1$). From Theorem \ref{th:TheoryForGuessworNoisyPUF} we get that when the transition probability is $D$, the asymptotic growth rate of the expected value of the guesswork is 
\beq{eq:UpperBoundGuessWithDist}{
H_{1/2}\pa{P_{X}}-H\pa{D}.
}
On the other hand the asymptotic growth rate of the expected value of the guesswork of a stable PUF whose bits are drawn i.i.d. from Bernoulli$\pa{p}$ is
\beq{eq:RenyiEntAsAMeaasureforGuess}{
H_{1/2}\pa{p}=2\cdot \log_{2}\pa{\sqrt{p}+\sqrt{1-p}}.
}
The first derivative of equation \eqref{eq:UpperBoundGuessWithDist} equals
\beq{}{
\log_{2}\pa{D}-\log_{2}\pa{1-D}
}
which diverges as $D$ approaches $0$. Therefore, even when the noise level (and $D$) is very small, it decreases the expected value significantly. On the other hand the first derivative of \eqref{eq:RenyiEntAsAMeaasureforGuess} is equal to zero at $p=1/2$ (i.e., when there is no bias). The first derivative around $p=1/2$ is very small and therefore bias does not affect the guesswork as much as noise. Figure \ref{fig:gwbiasvsnoise} presents the guesswork of an unstable unbiased PUF and the guesswork of a stable biased PUF. 

\begin{figure}
\bepsf{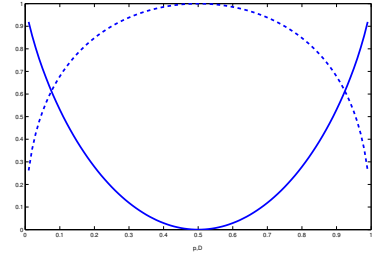}{8}\caption{The solid line presents the average guesswork of an unstable unbiased PUF $1-H\pa{D}$, whereas the dotted  line is the average guesswork of a stable biased PUF $H_{1/2}\pa{p}$.}\label{fig:gwbiasvsnoise}
\end{figure}

For example, the asymptotic exponential growth rate of the guesswork of an unbiased ($p=1/2$) unstable PUF with transition probability $D=0.1$ (i.e., a $10$\% noise) is equal to $0.53$ which is the guesswork of a stable biased PUF with $p=0.05$ (i.e., a $95$\% bias).   

In terms of min-entropy as presented in Lemma \ref{lem:ProofofMinEntropysubjecttoDistortion}, the divergence  $D\pa{D||p}=-H\pa{D}-D\log_{2}\pa{p}-\pa{1-D}\log_{2}\pa{1-p}$ and therefore its first derivative also diverge as $D$ goes to zero. Therefore, min-entropy is also very sensitive to the presence of noise. On the other hand, the min-entropy of a stable PUF is equal to
\beq{eq:noiselessmin_entropyExample}{
-m\log_{2}\pa{1-p}.
}
The first derivative of \eqref{eq:noiselessmin_entropyExample} equals $\frac{m}{1-p}$ when $0\le p\le 1/2$ and therefore it does not diverge. Hence, the effect of bias on the min-entropy is also less significant than the effect of noise. 

For example, the asymptotic min-entropy of an unbiased ($p=1/2$) unstable PUF with transition probability $D=0.1$ is equal to $1-H\pa{0.1}=0.53$ which is the min-entropy of a stable biased PUF with $p=0.31$ (i.e., a bias level of $69$\%).   

Note that in general the first derivative of the min-entropy does not equal to zero at $p=1/2$, and therefore bias has a stronger effect on min-entropy than on average guesswork. 

Figure \ref{fig:mebiasvsnoise} presents the behavior of the min-entropy as a function of $p$. It shows that it is more sensitive to bias than $H_{1/2}\pa{p}$.


\begin{figure}
\begin{center}
\epsfig{figure=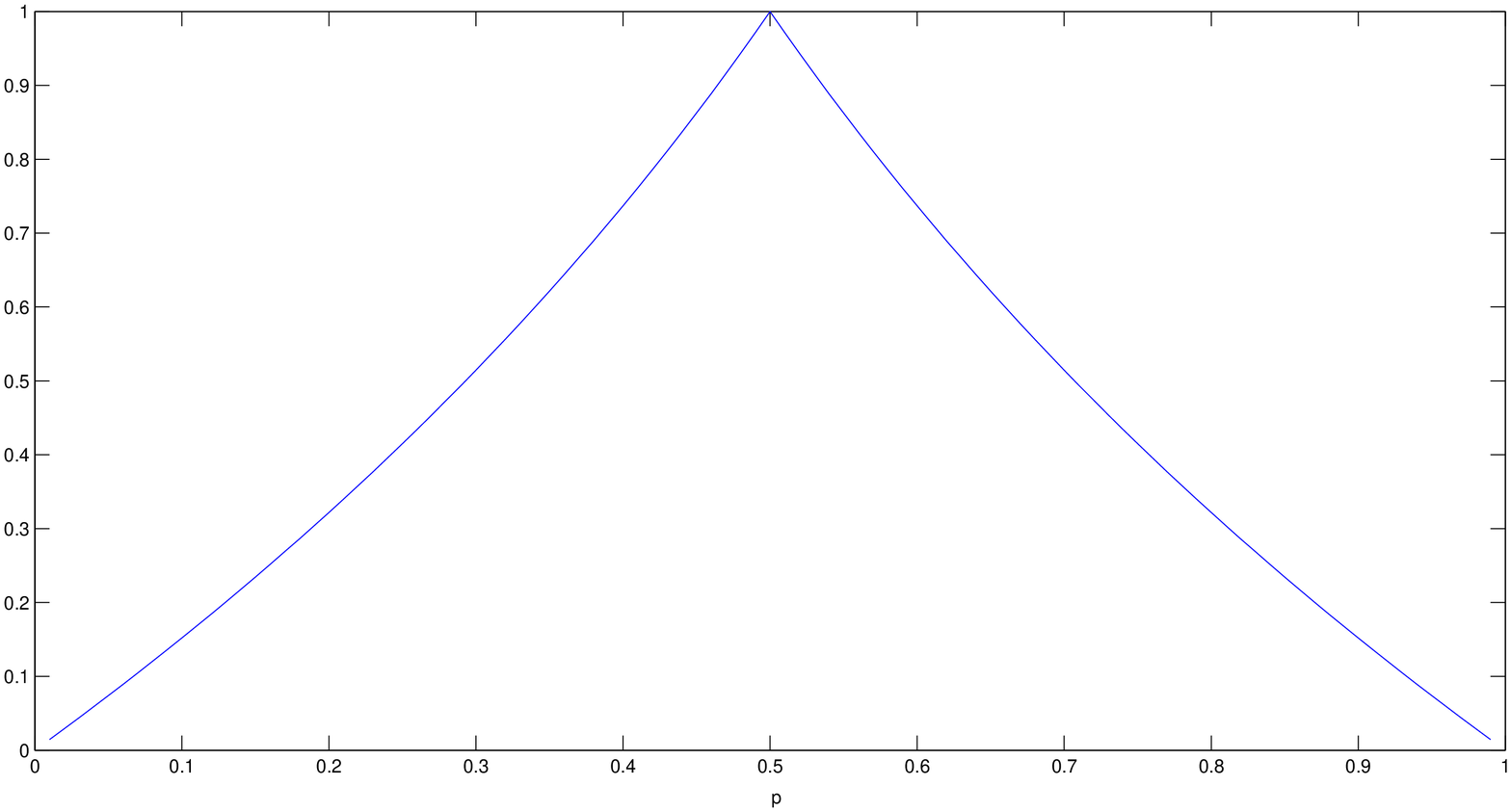,width=6 cm}
\end{center}
\label{fig:mebiasvsnoise}
\end{figure}

\section{Evaluating the Security Level of Weak PUFs Through Guesswork} \label{exp}

\subsection{Evaluation of Weak LEDPUF}
We evaluate the probability mass function of a bit generated by a weak LEDPUF based on simulation results for the formation of connections \beq{eq:bitsmarginals}
{
p_{X}\pa{1}=0.4626\quad p_{X}\pa{0}=0.5374.
}

The uniqueness is evaluated by calculating the fractional inter-distance \cite{Maiti2009} of 1000 weak LEDPUFs, each produces 512 bits of response. The distribution is with mean=0.503 and standard deviation=0.02. Since the variance value is proportional to the inverse of the length of the response, as the length of the response increases the variance value goes to zero while the mean value goes to $0.505$.



\subsection{Measurements of Noisy Weak PUFs}
Two noisy silicon PUFs: SRAM PUF and RO PUF, are measured at 20\textdegree C in our experiments. For the SRAM PUF, responses from 10 commercial 45nm SOI test chips with 176k byte of SRAM cells each are obtained. Every SRAM PUF is measured 10 times. The RO PUF is implemented on 15 Altera DE2-115 FPGA boards. To avoid correlated CRPs, 90 CRPs are generated from the 91 ROs in each RO PUF. Each RO PUF is measured 10 times.



The measurement results of noisy PUFs are summarized in Table \ref{table:inter_noisy}. The intra-FHD and inter-FHD are given in the second and third columns, respectively. Both PUFs show good results of small intra-FHD and close to 50\% inter-FHD. The stability shown in the forth column gives the percentage of stable bits through all 10 measurements, where a stable bit is a bit that remains the same during all measurements. A 93\% stability for the SRAM PUF, for example, means that 7\% of the bits flip at least once during the 10 measurements. For LEDPUF, the intra-FHD is 0\% and the stability is 100\%. The bias level (percentages of ones and zeros) are given in the last two columns.


\begin{table}
\centering
\caption{Noisy PUF measurements. All numbers are percentages.}
\begin{tabular}{|c|c|c|c|c|c|}
\hline
\multirow{2}{*}{} & \multirow{2}{*}{intra-FHD} & \multirow{2}{*}{inter-FHD} & \multirow{2}{*}{Stability} & \multicolumn{2}{c|}{Bias Level} \\ \cline{5-6} 
                  &         &           &           & One    & Zero \\ \hline
SRAM PUF          &  2.26 &  48.33  &  93.42  &  49.13  &   50.87   \\ \hline
RO PUF            &  2.48 &  47.13  & 91.19   &  51.38  &  48.62    \\ \hline
\end{tabular}
\label{table:inter_noisy}
\end{table}

For the RO PUF, in addition to the intra-FHD at 20\textdegree C, we also compare the intra-FHD between 20\textdegree C and 60\textdegree C, which is the reliability of the PUF if it is enrolled at 20\textdegree C but verified at 60\textdegree C. The results are presented in Figure \ref{fig:temp}. We can see that for most PUFs, the averaged intra-FHD at the extreme temperature is about 12\%, which implies that conventional ECC margin with error reduction techniques for the PUF would be required.

\begin{figure}[htb]
\centering
\includegraphics[width=2.5in]{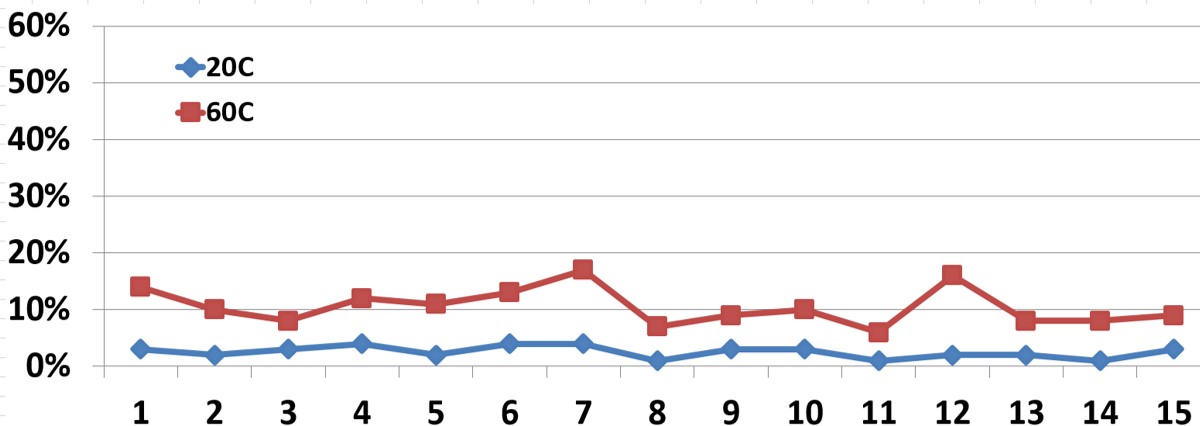}
\caption{Intra-FHD at extreme temperature variation for 15 RO PUFs.}
\label{fig:temp}
\end{figure}

For noisy SRAM PUF and RO PUF, the expected growth rate is calculated by plugging the intra-FHD to equation \eqref{eq:UpperBoundGuessWithDist}. The expected growth rate of weak LEDPUF is obtained by applying the bit probabilities given in \eqref{eq:bitsmarginals} to equation \eqref{eq:RenyiEntAsAMeaasureforGuess}. The results are summarized in Table \ref{table:guess}. We can see that even though the weak LEDPUF is more biased than the noisy PUFs, its guesswork growth rate is still higher than noisy PUFs. For RO PUF at 60\textdegree C, the guesswork growth rate becomes much worse compared with RO PUF at 20\textdegree C, which implies quantitatively how insecure a PUF can become under environmental variations.




\begin{table}
\centering
\caption{Growth rate of the expected value of the guesswork. When the key size of the PUF is 32, the average guesswork of the SRAM PUF is proportional to $2^{32\times0.8442}$, and the average guesswork of the LEDPUF is proportional to $2^{32\times0.9980}$.}
\begin{tabular}{|c|c|c|c|c|}
\hline
PUF Type & SRAM & RO at 20\textdegree C & RO at 60\textdegree C & LEDPUF\\
\hline
Growth rate & 0.8442  & 0.8323 & 0.4706 & 0.9980 \\
\hline
\end{tabular}
\label{table:guess}
\end{table}

\section{Strong LEDPUF Construction and Guesswork Analysis}\label{sec:GWStrongLEDPUF}
\subsection{The Construction of A Strong LEDPUF}
One of the shortcomings of using memory-based PUFs for CRPs, is the scaling of the hardware size as a function of the number of CRPs \cite{PUFStudy2010}. In general, each channel response pair requires a different set of circuits, and as a result the hardware size is proportional to the number of CRPs. On the other hand for strong PUFs the hardware size scales logarithmically as a function of the number of CRPs.

In order to create a strong LEDPUF we consider a keyed hash function along with a weak LEDPUF. The weak LEDPUF response is used as a key for the keyed hash function. The challenges serve as the input to the hash function, whereas the response is the output of the keyed hash function. Figure \ref{fig:Strong_LEDPUF_Scheme} presents a strong LEFPUF based on a keyed hash function and on a weak LEDPUF with $N$ bits of input and $L$ bits of output. 


It is important that the keyed hash function uses the key in such a way that does not enable the attacker to predict responses to unobserved challenges based on the observed ones. Therefore, concatenating the key directly to the challenge, which is vulnerable to extension or collision attacks, is not a good realization of the strong LEDPUF. 

We create strong LEDPUF by using a weak LEDPUF as a key for a keyed-hash message authentication code (HMAC) \cite{CanettiHMAC}. Any cryptographic hash function, such as SHA-1 or SHA-2 can be used for HMAC. It is worth mentioning that in \cite{JungHMAC} the authors also propose the use of PUFs with an HMAC in a somewhat similar manner; however, they do not take into consideration the overhead incurred by the instability of parametric PUFs.


To give a rough estimation of the hardware implementation cost of the strong LEDPUF, for a HMAC-SHA1, the implementation requires about 30k gates \cite{Mao-Yin2004}, and just the ECC part, BCH for example, of a parametric PUF would require same order of gates \cite{Xinmiao2015}.

The level of security of a strong LEDPUF depends on the underlying hash function and the quality of the weak LEDPUF that serves as a key, whereas weak LEDPUFs rely solely on the randomness in the manufacturing process.



\begin{figure}
\centering
\includegraphics[width = 2.8in]{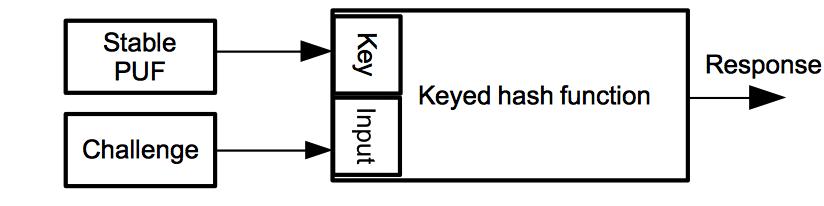}
\caption{A strong LEDPUF based on a keyed hash function (HMAC or NMAC) and a weak LEDPUF.}
\label{fig:Strong_LEDPUF_Scheme}
\end{figure}


For the simulation results, each strong LEDPUF consists of a weak LEDPUF that provides 2x256 bits for the two initial vectors (IV) of the nested hash, and each response is a 256-bit stream because SHA-256 is used in the construction. The same challenge is given to 1000 strong LEDPUFs, and the inter-distance of the responses is a distribution with mean=0.500 and standard deviation=0.03.

To construct a strong LEDPUF, only the weak LEDPUF can be used because of its 0\% intra-distance. If other existing weak PUFs with even small intra-distance are used, the intra-distance of the strong LEDPUF would be increased dramatically due to the avalanche effect. In other words, even a single bit flip of the weak PUF can completely change the response of the strong LEDPUF. Figure \ref{fig:weak_strong} (a) shows that the intra-distance of the strong LEDPUF jumps from 0\% to 50\% as the number of bit flips increases from zero to one.

Figure \ref{fig:weak_strong} (b) shows how the intra-distance of the strong LEDPUF rises as the intra-distance of the weak PUF increases in logarithmic scale. Since 2x256 bits of the IVs are from the weak PUF, for a weak PUF with 0.1\% intra-distance, the probability that it generates a same 512-bit response twice is about 60\%, which translates to a roughly 20\% intra-distance of the strong LEDPUF. Therefore, only the weak LEDPUF with a guaranteed 0\% intra-distance can be used for the IV generation.

\begin{figure}[h]
\begin{centering}
\includegraphics[width=3.5in]{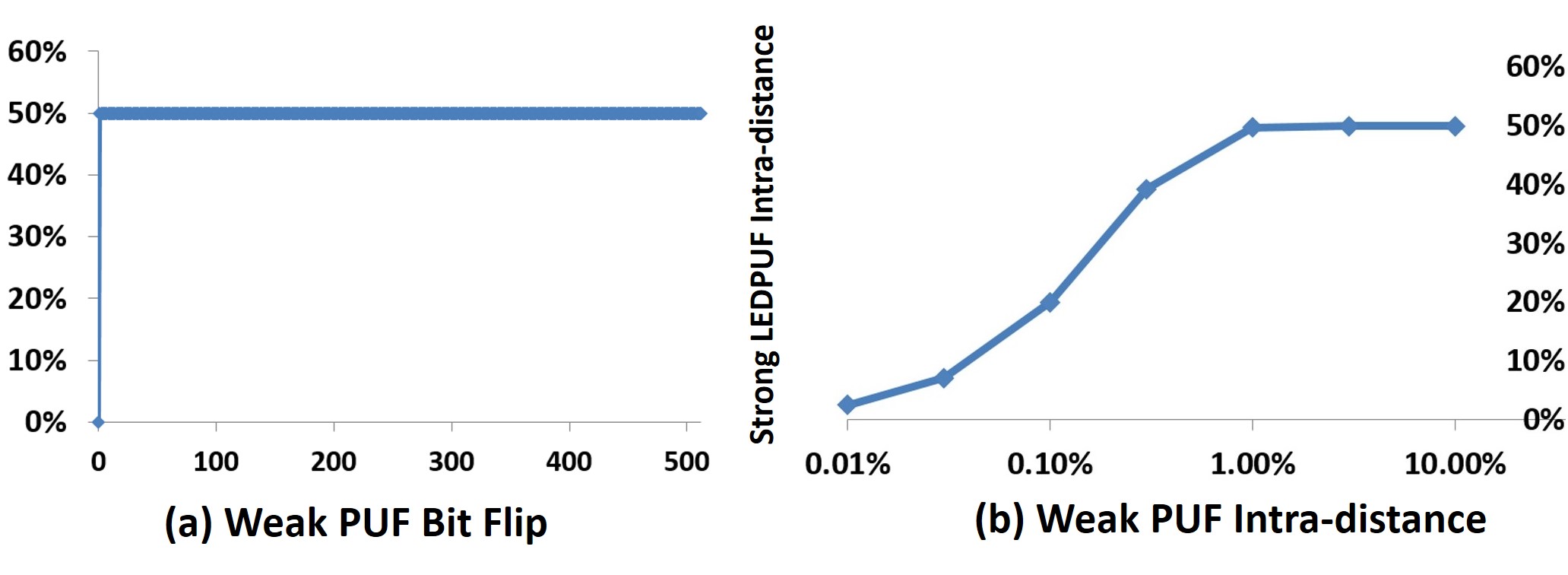}
\par\end{centering}
\caption{(a) A single bit flip from the weak PUF can induce a completely different response of the strong LEDPUF due to the avalanche effect of the hash function. (b) Intra-distance of the strong LEDPUF rises dramatically if other weak PUFs with small intra-distances are used in the strong LEDPUF construction.}
\label{fig:weak_strong}
\end{figure}
\vspace{-0.2in}
\subsection{The Guesswork of any Strong PUF}\label{subsec:TheGuessworkofAnyStrongPUF}
In this subsection we quantify the security of strong PUFs in terms of the number of guesses required to break them.
Our results quantify the number of secure authentications for which any strong PUF is good for. Furthermore, we compare the guesswork of our proposed strong LEDPUF to the guesswork of other strong PUFs that have been introduced in the literature. Finally, to demonstrate the importance of stability of a strong PUF, we show that the guesswork of a stable XOR arbiter PUF is larger than the guesswork of noisy ones, for the same number of observed CRPs.

We begin by defining the following game. 
\begin{defn}\label{def:AuthenticationGame}
Consider a strong PUF, which is used by an authentication scheme to authenticate $n$ unique challenges through observing their responses.  
The authentication problem is defined as follows:
\begin{enumerate}
\item For each challenge the attacker has to guess with a single response.
\item When the attacker does not guess correctly, it can mask itself to receive a new challenge.
\item Once the attacker makes a correct guess it is authenticated. 
\end{enumerate}
\end{defn}

\begin{rem}
Note that when authenticating a strong PUF through CRPs each challenge can be used only once. Furthermore, the problem defined above captures a strict security requirement that the system is compromised once the attacker manages to deceive the verifier. 
\end{rem}

\begin{rem}
When the attacker fails to guess any response correctly the attack fails. When the number of challenges is large we show that this event decays exponentially fast. 
\end{rem}

We now find the average guesswork of the game presented in Definition \ref{def:AuthenticationGame} as well the probability that the number of guesses is smaller than or equal to a certain number. 

\begin{theorem}
The average guesswork of the authentication problem presented in Definition \ref{def:AuthenticationGame} is
\beq{eq:AverageGuessworkGeneralExpressions}{
E\pa{G}=2^{-H_{\infty}\pa{1}}+\sum_{i=2}^{n}i\cdot 2^{-H_{\infty}\pa{i}}\cdot\prod_{k=1}^{i-1}\pa{1-2^{-H_{\infty}\pa{k}}}
}
where $2^{-H_{\infty}\pa{k}}$ is the most probable response to the $k$th challenge either given that the guesses for challenges $1,\dots, k-1$ were incorrect or given the previous $k-1$ CRPs (these two scenarios can lead to different min-entropy). Furthermore, the probability that the number of guesses is smaller than or equal to $l$ is 
\baln{}
{
Pr\pa{G\in\pac{1,\dots,l}}
&=&\nn\\
2^{-H_{\infty}\pa{1}}&+&\sum_{i=2}^{l}2^{-H_{\infty}\pa{i}}\prod_{j=1}^{i-1}\pa{1-2^{-H_{\infty}\pa{j}}}
}
Finally, the probability of attack failure is 
$\prod_{i=1}^{n}\pa{1-2^{-H_{\infty}\pa{i}}}$.
\end{theorem}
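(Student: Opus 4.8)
The plan is to model the attack as a sequence of independent Bernoulli trials, one per challenge, and then compute the expected stopping time of the first success together with the associated tail and failure probabilities. First I would set up the probabilistic structure: for each $i \in \{1,\dots,n\}$, the attacker, having failed on challenges $1,\dots,i-1$, is presented with the $i$th challenge and issues a single best guess; the probability that this guess is correct is, by definition of min-entropy, $2^{-H_{\infty}(i)}$, where $H_{\infty}(i)$ is the min-entropy of the $i$th response conditioned on the information accrued so far (either the event that the first $i-1$ guesses failed, or equivalently the first $i-1$ observed CRPs — the remark already flags that these two conditionings may differ, so I would keep $H_{\infty}(i)$ as an abstract placeholder that the caller instantiates). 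The key observation is that the attack terminates at step $i$ exactly when guesses $1,\dots,i-1$ all fail and guess $i$ succeeds; since each challenge is used only once and the underlying randomness is drawn afresh per challenge, these events are governed by the product structure $\prod_{k=1}^{i-1}\pa{1-2^{-H_{\infty}\pa{k}}}\cdot 2^{-H_{\infty}\pa{i}}$.

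Second I would assemble the three claimed formulas from this event decomposition. For the expected guesswork, I write $E\pa{G}=\sum_{i=1}^{n} i\cdot Pr\pa{G=i}$, substitute $Pr\pa{G=i}=2^{-H_{\infty}\pa{i}}\prod_{k=1}^{i-1}\pa{1-2^{-H_{\infty}\pa{k}}}$ (with the empty product equal to $1$ at $i=1$), and peel off the $i=1$ term to match the stated form $2^{-H_{\infty}\pa{1}}+\sum_{i=2}^{n} i\cdot 2^{-H_{\infty}\pa{i}}\prod_{k=1}^{i-1}\pa{1-2^{-H_{\infty}\pa{k}}}$. For the cumulative distribution, $Pr\pa{G\in\pac{1,\dots,l}}=\sum_{i=1}^{l}Pr\pa{G=i}$, which is again the stated telescoped sum (and, as a sanity check, equals $1-\prod_{i=1}^{l}\pa{1-2^{-H_{\infty}\pa{i}}}$ by the standard identity for a sum of ``first success'' probabilities). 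For the probability of attack failure, the attack fails iff every one of the $n$ guesses is wrong, which by independence across challenges is $\prod_{i=1}^{n}\pa{1-2^{-H_{\infty}\pa{i}}}$; this also drops straight out of the CDF by setting $l=n$ and subtracting from $1$.

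Third I would record the small bookkeeping points that make the argument rigorous: that $G$ is a proper random variable on $\pac{1,\dots,n}\cup\pac{\text{fail}}$ so the probabilities sum to one (indeed $\sum_{i=1}^{n}Pr\pa{G=i}+\prod_{i=1}^{n}\pa{1-2^{-H_{\infty}\pa{i}}}=1$, which is exactly the telescoping identity), and that the best-single-guess success probability is $2^{-H_{\infty}\pa{i}}$ rather than something larger — this is just the one-guess case of the guesswork discussion in Section~\ref{subsec:WhyConsiderGuesswork}, where the optimal one-shot strategy is to name the most probable value.

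The main obstacle is conceptual rather than computational: justifying that the relevant success probability at step $i$ is governed by a single, well-defined min-entropy $H_{\infty}(i)$, since the information available to the attacker after $i-1$ failed guesses is not identical to the information in $i-1$ freely observed CRPs (a failed guess is a negative observation, which in general conditions the posterior differently than seeing the true response). The honest resolution, which the theorem statement already adopts, is to state the result for whichever conditioning the application dictates and to let $H_{\infty}(i)$ denote the corresponding conditional min-entropy; the combinatorial skeleton — first-success stopping time of an independent sequence — is unchanged either way. Once that is granted, every step above is a one-line manipulation of the geometric-type product, so there is no further difficulty.
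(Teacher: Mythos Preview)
Your proposal is correct and follows essentially the same approach as the paper: identify that the optimal single-shot guess at step $i$ is the most probable response (success probability $2^{-H_{\infty}(i)}$), then read off the first-success stopping-time formulas from the resulting product structure. Your write-up is in fact more detailed than the paper's own proof, which simply observes the optimal strategy and states that the formulas follow directly; the only nitpick is that the product for the failure probability comes from the chain rule with the conditional min-entropies, not from independence per se, but you effectively acknowledge this elsewhere in your argument.
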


\begin{proof}
Each response has a certain statistical profile based on the previous CRPs. When the attacker knows this profile the optimal strategy to minimize the number of guesses is to guess the most probable one. This in turn leads to $2^{-H_{\infty}\pa{i}}$ where $H_{\infty}\pa{i}$ is the min-entropy given that either the previous $i-1$ guesses were not correct or that the previous $i-1$ CRPs were revealed to the attacker. The results of the theorem follow directly from this argument. 
\end{proof}

\begin{cor}
When the statistical profile does not change across challenges we get that
\beq{eq:AverageGuessworkNonUniform}{
E\pa{G} = 2^{H_{\infty}}-\pa{1-2^{-H_{\infty}}}^{n}\cdot\pa{n+2^{H_{\infty}}}
}
and 
\beq{eq:ProbabilityGuessworkSmallerThanCertainValue}
{
Pr\pa{G\in\pac{1,\dots,l}}=1-\pa{1-2^{-H_{\infty}}}^{l}.}
\end{cor}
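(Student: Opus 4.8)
The plan is to specialize the two formulas of the preceding theorem to the hypothesis $H_{\infty}\pa{i}=H_{\infty}$ for every $i$, and then evaluate the resulting elementary sums. Introduce $p=2^{-H_{\infty}}$, the (now constant) probability that a given guess is correct, so $1-p$ is the probability of a miss at a stage. Substituting $2^{-H_{\infty}\pa{k}}=p$ into the average-guesswork expression collapses the product $\prod_{k=1}^{i-1}\pa{1-2^{-H_{\infty}\pa{k}}}$ to $\pa{1-p}^{i-1}$, so that $E\pa{G}=p+\sum_{i=2}^{n}i\,p\,\pa{1-p}^{i-1}=\sum_{i=1}^{n}i\,p\,\pa{1-p}^{i-1}$, which is exactly the truncated first moment of a geometric law with parameter $p$ (the ``truncation'' reflecting the possibility that all $n$ guesses miss and the attack fails).

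Next I would put this finite sum in closed form. The cleanest route is to start from the finite geometric series $\sum_{i=1}^{n}x^{i}=\frac{x-x^{n+1}}{1-x}$, differentiate in $x$ and multiply by $x$, or equivalently invoke the standard identity $\sum_{i=1}^{n}i\,x^{i-1}=\frac{1-\pa{n+1}x^{n}+n\,x^{n+1}}{\pa{1-x}^{2}}$, applied with $x=1-p$ and hence $1-x=p$. This gives $E\pa{G}=\frac{1-\pa{n+1}\pa{1-p}^{n}+n\pa{1-p}^{n+1}}{p}$. A short simplification of the numerator, using $\pa{n+1}-n\pa{1-p}=np+1$, rewrites it as $1-\pa{1-p}^{n}\pa{np+1}$; dividing by $p$ and recalling $1/p=2^{H_{\infty}}$ yields $E\pa{G}=2^{H_{\infty}}-\pa{1-2^{-H_{\infty}}}^{n}\pa{n+2^{H_{\infty}}}$, which is \eqref{eq:AverageGuessworkNonUniform}.

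For the tail probability the same substitution turns the theorem's expression into $Pr\pa{G\in\pac{1,\dots,l}}=\sum_{i=1}^{l}p\,\pa{1-p}^{i-1}$, a plain geometric sum equal to $1-\pa{1-p}^{l}=1-\pa{1-2^{-H_{\infty}}}^{l}$, which is \eqref{eq:ProbabilityGuessworkSmallerThanCertainValue}. I do not anticipate any genuine obstacle here: the only step beyond direct substitution is recognizing and summing the arithmetic--geometric series $\sum_{i=1}^{n}i\,x^{i-1}$, and that is routine (differentiate the geometric series). The remaining work is purely bookkeeping to reconcile the two algebraic forms of the numerator.
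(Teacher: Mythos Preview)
Your proposal is correct and is exactly the computation the paper leaves implicit by stating the result as an unproved corollary of the preceding theorem: substitute the constant min-entropy into the general formulas and evaluate the resulting truncated arithmetic--geometric and geometric sums. Your algebraic verification of \eqref{eq:AverageGuessworkNonUniform} via the identity $\sum_{i=1}^{n}i\,x^{i-1}=\frac{1-(n+1)x^{n}+n\,x^{n+1}}{(1-x)^{2}}$ is the natural route, and the simplification to $2^{H_{\infty}}-(1-2^{-H_{\infty}})^{n}(n+2^{H_{\infty}})$ checks out.
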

\begin{rem}
Note that the average guesswork \eqref{eq:AverageGuessworkNonUniform} is equal to 
\beq{eq:AverageGuessworkConvergeToMinEntropyiid}{
E\pa{G}= 2^{H_{\infty}}-\epsilon
}
when $n\gg 1$, where $\epsilon\ll 1$ decays exponentially fast.
\end{rem}

\begin{rem}
When the attacker does not know the statistical profile of the response (for instance when the structure is too complex for him to infer it), all he can do is guess a response uniformly, which leads in turn to $H_{\infty}=1$.

Furthermore, in some cases the attacker can infer the statistical profile based on the structure of a strong PUF and a set of CRPs that have been revealed to him (see \cite{Ruhrmair13,Becker2015} for attacks on arbiter PUFs, XOR arbiter PUFs, etc.).  
\end{rem}

\subsection{Quantifying the Security of Specific Strong PUFs}
In this subsection we quantify the level of security of various strong PUFs in terms of their guesswork. 

the result in  \eqref{eq:AverageGuessworkConvergeToMinEntropyiid} also applies to the case when an attacker can observe multiple CRPs. For example model attacks over strong PUFs \cite{Ruhrmair13,Becker2015} enable attackers to accurately guess responses based on previously observed CRPs. In terms of guesswork it means that once an attacker observes a certain set of CRPs, conditioned on the CRPs that have been revealed so far, the most likely conditional probability can be very high. 

The average guesswork in \eqref{eq:AverageGuessworkConvergeToMinEntropyiid} allows us to quantify how secure strong LEDPUF is compared to other strong PUFs from the literature that are susceptible to model building attacks. 

Strong LEDPUFs are based on HMAC, and so idealy an attacker can not infer anything from observing CRPs in this case. Therefore, when the number of bits at the output of a strong LEDPUF is $m$, the average guesswork  converges to $2^{m}$ after observing any amount of CRPs. Even when the key is biased such that each bit is drawn Bernoulli$\pa{0.53}$ \cite{Wang16}, the average guesswork when HMAC is a strongly universal set of hash functions \cite{Carter_Univ_Hash_Func}, is $2^{-\log_{2}\pa{0.53}\cdot m}=2^{0.91\cdot m}$ based on \eqref{eq:AverageGuessworkConvergeToMinEntropyiid}.

On the other hand, in \cite{Ruhrmair13} it has been shown for various noise-free PUF simulations such as arbiter PUFs, XOR arbiter PUFs and Feed-Forward Arbiter PUFs that the prediction rate varies between $97\%$ and $99\%$, after observing a few hundreds of thousands of CRPs and implementing a model building attack. Essentially, this type of attacks achieve a prediction rate, which is an estimation of the probability mass function of the next response, conditioned on a certain set of CRPs. In \cite{Ruhrmair13} it leads to conditional probability of at least $2^{\log_{2}\pa{0.97}\cdot m}=2^{-0.04\cdot m}$ when the prediction rate is $97$\%, and so the average guesswork under this model building attack is achieved by assigning this probability to the average guesswork in \eqref{eq:AverageGuessworkConvergeToMinEntropyiid} in which case we get $2^{0.04\cdot m}$. Note that when the prediction rate is $99$\% the average guesswork is $2^{-\log_{2}\pa{0.99}\cdot m}=2^{0.014\cdot m}$; therefore, when $m=256$ we get that at $97$\% the PUF is $2^{0.026\times 256}=100$ times more secure than at $99$\%.

Therefore, guesswork provides a unified framework for comparing the level of security of different PUFs under model building attacks. It can also be used as a means of understanding what is the desired prediction rate for a model based attack, and as a result how many challenge response pairs should be observed.

Next, we compare the security of stable and noisy XOR arbiter PUFs \cite{Ruhrmair13} under model based attacks in terms of the number of guesses for which the probability of guessing the correct response is $99$\%. We use equation \eqref{eq:ProbabilityGuessworkSmallerThanCertainValue} to derive the results of this subsection. 

The expression in \eqref{eq:ProbabilityGuessworkSmallerThanCertainValue} depends on the min-entropy, and so for noisy PUFs we need to incorporate the effect of the noise into the min-entropy. For this we use Lemma \ref{lem:ProofofMinEntropysubjecttoDistortion} in which the min-entropy is extended to the noisy case. In Table \ref{table:guess} we use guesswork to compare the security of stable XOR arbiter PUFs to the one of noisy XOR arbiter PUFs with the same number of XORs, under model based attacks. We assign the prediction rates and noise levels reported in \cite{Ruhrmair13} to \eqref{eq:ProbabilityGuessworkSmallerThanCertainValue}, and find the number of guesses under model based attacks in which the verifier has to take into account the noise as the PUF owner observes noisy responses. The table shows how much more secure stable XOR arbiter PUFs are when compared to the noisy versions under model based attacks. Essentially, it shows how susceptible such arbiter-based strong PUF is after observing a certain number of CRPs. In fact, when the noise level is over 5\%, the probability of guessing the correct response up to the noise level is very close to one (about $1-10^{-10}$), which means that this PUF is completely broken. This is because the prediction rate of each bit as reported in \cite{Ruhrmair13} is $97.34$\%, whereas the noise level is $5$\% and so the chance that the guessed response is not within the noise level of the original response is extremely small for reasonable values of $m$. Therefore, guesswork enables us to incorporate the effect of noise and model based attacks into one framework that allows comparison between different PUFs and various scenarios. 

\begin{table}
\centering
\caption{The number of guesses for which the probability of guessing the correct response is lager than $99$\% when $m=1024$, for a stable XOR arbiter PUF ($D=$ 0\%), and noisy ones, under model based attacks for which $200$ thousand and $500$ thousand CRPs are observed. The values are based on the noise levels and prediction rates reported in \cite{Ruhrmair13}. The second row presents 4-XORs, whereas the third row presents 5-XORs.}
\begin{tabular}{|c|c|c|c|c|}
\hline
CRPs ($\times 10^{3}$) & D=0\% & D=2\% & D=5\% & D=10\%\\
\hline
200 & $41$  & $10$ & $1$ & $1$ \\
\hline
500 & $22$ & $10$ & $1$ & $1$\\
\hline
\end{tabular}
\label{table:guess}
\end{table}

\subsection{Guesswork of Message Authentication Codes}
When the secret of a strong PUF can be read once in a secured and tamper-evident manner, it can serve as a message authentication code. When a keyed hash function is used as a message authentication code (MAC), ideally, a key chooses a hash function from a strongly universal set of hash functions \cite{Carter_Univ_Hash_Func}, that is, the set of all possible hash functions with an input of size $N$ and an output of size $L$, where $N
\ge L$. In this subsection we analyze the guesswork of MACs under the theoretical assumption that the key
chooses a hash function from this set based on a mapping, $f\pa{\cdot}$, from the set of all possible keys to the strongly universal set of hash functions. 

We begin by defining a guessing game for strong LEFPUFs.
\begin{defn}
Consider the following game: Bob draws a key at random from a probability function $P_{X}\pa{x}$, and an attacker
Alice who does not know $x$ but knows the probability mass function $P_{X}\pa{\cdot}$ and the mapping from the set of keys to the set of hash functions, $f\pa{\cdot}$, tries to guess the responses of a dictionary of size $2^{N}$ (i.e., to guess the responses of $2^{N}$ messages that may be sent). For each value among the set of $2^{N}$ possible inputs, an
oracle tells Alice whether her guess is correct. When Alice guesses the right output, Bob proceeds to the next message. The game ends when Alice guesses the response for each and every message. We assume that Alice has an unlimited computing power and so she knows the mapping from the set of keys to the set of bins (i.e., for each challenge she knows the underlying distribution of the bins); however, she does not know the actual value of the key.
\end{defn}
\begin{rem}
In this problem Alice tries to find the responses to a set of messages that may be used by a user. After she breaks these messages, she can impersonate as the user. Since these messages may be used multiple times (e.g., popular words), she can try to guess the response to each of the messages multiple times as well. This is different from the problem presented in subsection \ref{subsec:TheGuessworkofAnyStrongPUF}, where the attacker had a single guess for each challenge. 
\end{rem}

The optimal strategy for minimizing the number of guesses is minimizing the number of guesses for each message. This is achieved by guessing for each message the bins starting from the most probable bin to the least probable one.

The guesswork is the sum of the number of guesses across all messages, and therefore is equal to \vspace{-0.1in}
\beq{eq:TheTotalGuesswork}
{
G\pa{bins}=G\pa{bin_{1}}+\sum_{k=2}^{2^{N}}G\pa{bin_{k}|bin_{1},\dots,bin_{k-1}}.
}
where $G\pa{bins}$ is the number of guesses required to find the responses of all $2^{N}$ messages, $G\pa{bin_{1}}$ is the number of guesses for the first message, and $G\pa{bin_{k}|bin_{1},\dots,bin_{k-1}}$ is the number of guesses for the $k$th message given the bin values for the first $k-1$ messages. Note that 
\beq{}
{
P_{BINS}\pa{bin_{1},\dots,bin_{2^{N}}}=P_{BINS}\pa{f\pa{x}}=P_{X}\pa{x}
}
where $f\pa{\cdot}$ is the mapping from keys to bins, which is a bijection. Hence, the vector of bins and the key have the same probability mass function.

By averaging we get that
\baln{}
{
\pa{1+L\ln 2}^{-1}&\cdot& \left( 2^{H_{1/2}\pa{bin_{1}}}+\sum_{k=2}^{2^{N}}2^{H_{1/2}\pa{bin_{k}|bin_{k-1},\dots,bin_{1}}}\right)  \nonumber\\
\le &E&\pa{G\pa{bins}} \le \nonumber\\
2^{H_{1/2}\pa{bin_{1}}}&+&\sum_{k=2}^{2^{N}}2^{H_{1/2}\pa{bin_{k}|bin_{k-1},\dots,bin_{1}}}.
}

Without loss of generality let us write the conditional sum over the messages in \eqref{eq:TheTotalGuesswork} as
\beq{eq:TotalGuessworkSimpleWriting}
{
G\pa{bins}=\sum_{k=1}^{2^{N}}bin_{k}.
}
Based on  \eqref{eq:TotalGuessworkSimpleWriting} we can state the following two theorems. 
\begin{theorem}\label{th:MartUniformCase}
When the key is i.i.d. Bernoulli$\pa{1/2}$, the average guesswork across all messages for any $f\pa{\cdot}$, which is a bijection, is
\beq{eq:AverageUnbiasedKey}{
\eta=E\pa{G\pa{bins}}=2^{N}\cdot\pa{\frac{2^{L}+1}{2}}
}
and the probability of deviating from the mean value is upper bounded by
\beq{eq:AzimasIneqUnBiasedKeys}{
Pr\pa{\pa{G\pa{bins}-\eta}>\alpha\cdot\eta}\le\exp\pa{-\frac{\alpha^{2}}{8}2^{N}}
}
where $\Al\ge 0$.
\end{theorem}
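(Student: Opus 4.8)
The plan is to exploit the idealization built into the hypothesis: here the ``strongly universal set'' is the set of \emph{all} maps $\{0,1\}^{N}\to\{0,1\}^{L}$, and $f\pa{\cdot}$ is a bijection from the key space onto this set. First I would note that an i.i.d. Bernoulli$\pa{1/2}$ key is uniform on the key space, so by the identity $P_{BINS}\pa{f\pa{x}}=P_{X}\pa{x}$ recorded above, the bin vector $\pa{bin_{1},\dots,bin_{2^{N}}}$ is uniform on $\pa{\{0,1\}^{L}}^{2^{N}}$; equivalently, the $2^{N}$ bin values are i.i.d. uniform over $\{0,1\}^{L}$. This is the one step that carries genuine content.

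Granting this, the mean drops out. Conditioned on $bin_{1},\dots,bin_{k-1}$, independence means Alice gains no information about $bin_{k}$, so every guessing order for the $k$th message is optimal and $G\pa{bin_{k}\mid bin_{1},\dots,bin_{k-1}}$ is uniform on $\pac{1,\dots,2^{L}}$, with the same law for every $k$ and independent across $k$. Hence $E\pa{G\pa{bin_{k}}}=2^{-L}\sum_{j=1}^{2^{L}}j=\pa{2^{L}+1}/2$, and by \eqref{eq:TotalGuessworkSimpleWriting} and linearity of expectation $\eta=E\pa{G\pa{bins}}=2^{N}\cdot\pa{2^{L}+1}/2$, which is \eqref{eq:AverageUnbiasedKey}.

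For the concentration bound I would treat $G\pa{bins}=\sum_{k=1}^{2^{N}}G\pa{bin_{k}}$ as a sum of $2^{N}$ independent terms, each lying in $\paq{1,2^{L}}$, and introduce the Doob martingale $Z_{k}=E\paq{G\pa{bins}\mid bin_{1},\dots,bin_{k}}$ (so $Z_{0}=\eta$ and $Z_{2^{N}}=G\pa{bins}$), whose increments obey $|Z_{k}-Z_{k-1}|=|G\pa{bin_{k}}-\pa{2^{L}+1}/2|\le 2^{L}$. Azuma--Hoeffding then yields $Pr\pa{G\pa{bins}-\eta>\lambda}\le\exp\pa{-\lambda^{2}/\pa{2\cdot 2^{N}\cdot 2^{2L}}}$; substituting $\lambda=\alpha\eta$ and using the crude bound $\eta\ge 2^{N+L-1}$ makes the exponent at least $\alpha^{2}2^{2N+2L-2}/2^{N+2L+1}=\alpha^{2}2^{N-3}$, which is exactly \eqref{eq:AzimasIneqUnBiasedKeys}.

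The only obstacle worth flagging is the first step: the exponential tail genuinely needs \emph{full} independence of the bins, which is precisely what the ``all functions'' idealization provides. For a merely pairwise-independent ($2$-strongly-universal) family the mean computation would still go through word for word, but one could then only extract a Chebyshev-type polynomial tail rather than \eqref{eq:AzimasIneqUnBiasedKeys}. After the reduction to i.i.d. uniform bins, everything is routine --- a geometric sum for the mean and a textbook martingale inequality applied with a deliberately loose constant (hence the factor $8$ rather than something sharper).
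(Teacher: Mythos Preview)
Your argument is correct and matches the paper's proof essentially line for line: the paper also observes that an i.i.d.\ Bernoulli$(1/2)$ key pushed through a bijection makes $bin_{1},\dots,bin_{2^{N}}$ i.i.d.\ uniform, computes the per-message mean $(2^{L}+1)/2$, and then applies Azuma's inequality to the centered sum (its intermediate bound is $\exp\pa{-t^{2}/\pa{8\cdot 2^{N+2L}}}$, after which $t=\alpha\eta$ yields \eqref{eq:AzimasIneqUnBiasedKeys}). Your write-up is in fact a bit more explicit about the Doob-martingale structure and the arithmetic behind the constant $8$, and your closing remark on pairwise independence versus full independence is a useful gloss the paper does not make.
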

\begin{proof}
When the key is i.i.d. Bernoulli$\pa{1/2}$, and $f\pa{\cdot}$ is a bijection, the bins $\pac{bin_{1},\dots,bin_{2^{N}}}$ are i.i.d. uniform. Therefore, the average guesswork of each bin is $\frac{2^{L}+1}{2}$, and since there are $2^{N}$ messages we get \eqref{eq:AverageUnbiasedKey}. Furthermore, the probability function of $G\pa{bins}$ can be upper bounded by  using Azuma's inequality \cite{AlonProbMethod}. The sum of the elements in \eqref{eq:TotalGuessworkSimpleWriting} is a Martingale (when subtracting the expected value) and therefore the probability function can be bounded by Azuma's inequality
\beq{}
{
Pr\pa{\pa{G\pa{bins}-\eta}>t}\le \exp\pa{-\frac{t^{2}}{8\cdot 2^{N+2L}}}.
}
Therefore, for $t=\alpha\cdot \eta$ and $\alpha>0$ we get \eqref{eq:AzimasIneqUnBiasedKeys}.
\end{proof}

Equation \eqref{eq:AzimasIneqUnBiasedKeys} leads to a concentration result, showing that as $N$ increases, the probability of deviating from the average guesswork goes to zero. 

\begin{theorem}
When the key is drawn i.i.d. Bernoulli$\pa{p}$, where $0< p \le 1/2$, and $f\pa{\cdot}$ is the identity function, that is, $\pa{bin_{1},\dots,bin_{2^{N}}}=x$, the average guesswork increases like  
\beq{eq:AverageGuessworkiidey}
{
E\pa{G\pa{bins}}=\eta=2^{\pa{N+L\cdot H_{1/2}\pa{p}}}
}
and 
\beq{eq:AzimasIneqBiasedKeys}
{
Pr\pa{\pa{G\pa{bins}-\eta}>\alpha\cdot\eta}\le\exp\pa{-\frac{\alpha^{2}}{8}2^{N-2L\cdot\pa{1-H_{1/2}\pa{p}}}}
}
where $\Al >0$. 
\end{theorem}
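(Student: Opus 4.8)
The plan is to follow the structure of the proof of Theorem \ref{th:MartUniformCase}, replacing the uniform per-bin guesswork by the guesswork of a biased block. The starting observation is that when $f\pa{\cdot}$ is the identity, the vector $\pa{bin_{1},\dots,bin_{2^{N}}}$ is literally the key $x$ partitioned into $2^{N}$ consecutive blocks of $L$ bits, and since $x$ is drawn i.i.d. Bernoulli$\pa{p}$ these blocks are i.i.d., each being an $L$-bit word with i.i.d. Bernoulli$\pa{p}$ coordinates. Because the blocks are independent, revealing $bin_{1},\dots,bin_{k-1}$ gives the attacker no information about $bin_{k}$; hence the optimal per-message guesswork $G\pa{bin_{k}|bin_{1},\dots,bin_{k-1}}$ has the same distribution as the guesswork of a single i.i.d. Bernoulli$\pa{p}^{\otimes L}$ word, and these random variables are independent across $k$. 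By \eqref{eq:TotalGuessworkSimpleWriting} we then write $G\pa{bins}=\sum_{k=1}^{2^{N}}G_{k}$ with $G_{k}$ i.i.d. and $1\le G_{k}\le 2^{L}$.

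Next I would compute the mean. By independence, $E\pa{G\pa{bins}}=2^{N}\cdot E\pa{G_{1}}$, and $E\pa{G_{1}}$ is sandwiched by Arikan's bound \eqref{eq:GeneralExp} with $\rho=1$ and $Y$ deterministic (no side information): $\pa{1+L\ln 2}^{-1}\pa{\sqrt{p}+\sqrt{1-p}}^{2L}\le E\pa{G_{1}}\le\pa{\sqrt{p}+\sqrt{1-p}}^{2L}$. Since $\pa{\sqrt{p}+\sqrt{1-p}}^{2L}=2^{L\cdot H_{1/2}\pa{p}}$, the subexponential prefactor $\pa{1+L\ln 2}^{-1}$ does not affect the exponential rate, so $E\pa{G\pa{bins}}$ grows like $\eta=2^{N+L\cdot H_{1/2}\pa{p}}$, which gives \eqref{eq:AverageGuessworkiidey}; in particular $E\pa{G\pa{bins}}\le\eta$, which is all the one-sided tail needs.

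For the concentration I would reuse the martingale argument of Theorem \ref{th:MartUniformCase} essentially verbatim: $G\pa{bins}$ is a sum of independent random variables each lying in $\paq{1,2^{L}}$, so after subtracting its mean it is a martingale whose increments are bounded by $2^{L+1}$, and Azuma's inequality yields $Pr\pa{G\pa{bins}-\eta>t}\le\exp\pa{-t^{2}/\pa{8\cdot 2^{N+2L}}}$. Substituting $t=\alpha\cdot\eta=\alpha\cdot 2^{N+L\cdot H_{1/2}\pa{p}}$ and simplifying the exponent, $t^{2}/\pa{8\cdot 2^{N+2L}}=\tfrac{\alpha^{2}}{8}\,2^{2N+2L H_{1/2}\pa{p}-N-2L}=\tfrac{\alpha^{2}}{8}\,2^{N-2L\pa{1-H_{1/2}\pa{p}}}$, which is exactly \eqref{eq:AzimasIneqBiasedKeys}.

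The step I expect to carry the real content is the \emph{decoupling} in the first paragraph — the claim that the per-message guessworks are i.i.d. — because it is precisely where the hypothesis that $f$ is the identity is used: for a general bijection the bins need not be independent, and one would then have to bound conditional guessworks of correlated blocks, an obstruction that Theorem \ref{th:MartUniformCase} sidesteps by appealing to uniformity rather than independence. A minor point to handle carefully is that $\eta$ is the exponential growth rate and not literally the mean, so the concentration should formally be stated around $E\pa{G\pa{bins}}$, with the gap $\eta-E\pa{G\pa{bins}}\ge 0$ absorbed into the one-sided bound; everything else is routine bookkeeping.
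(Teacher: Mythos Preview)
Your proposal is correct and matches the paper's approach exactly: the paper's proof consists of the single observation that under the identity mapping the bins $bin_{1},\dots,bin_{2^{N}}$ are i.i.d., after which it defers to the proof of Theorem \ref{th:MartUniformCase}. You supply precisely this decoupling argument, then carry out the Arikan bound for the mean and the Azuma computation that the paper leaves implicit; your remark that $\eta$ is the exponential rate rather than the literal mean, with the gap absorbed by the one-sided inequality, is a useful clarification the paper does not spell out.
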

\begin{proof}
When $f\pa{\cdot}$ is the identity mapping, $bin_{1},\dots,bin_{2^{N}}$ are i.i.d. (regardless of the order of messages). The rest of the proof follows along the same lines as the proof of Theorem \ref{th:MartUniformCase}. 
\end{proof}

\vspace{-0.2in}
\section{Conclusion} \label{conclusion}
In this paper we propose the first stability-guaranteed PUF that requires no stability enhancement techniques, where the source of randomness is extracted from locally enhanced DSA process. 
Detailed constructions of two LEDPUFs: the weak LEDPUF and the strong LEDPUF, are presented. Inter-distance measurements on the LEDPUFs show that both weak and strong LEDPUFs are ideally unique. The area and latency of the weak LEDPUF is much smaller than existing weak PUFs because no error correcting schemes are needed. The strong LEDPUF provides large CRP space because of its cryptographic hash based structure. The weak LEDPUF used in the strong LEDPUF construction cannot be replaced by existing weak PUFs because an absolute 0\% intra-distance is required for the weak PUF to avoid the avalanche effect of the strong LEDPUF. We quantify the level of security provided by weak LEDPUF by calculating the expected guesswork resulting from weak LEDPUFs empirical probability function; the loss compared to a fair coin toss is negligible.

Furthermore, we develop a unified guesswork-based analyses for PUFs. We show through guesswork analysis that stability has a more severe impact on the PUF security than biased responses. In addition, we analyze guesswork for three new problems: Guesswork under probability of attack failure, the guesswork of strong PUFs, and the guesswork of idealized version of MACs.

Our ongoing work includes exploring other sources of LEDPUF that are more resilient to invasive attacks, such as Scan Electron Microscopy (SEM) or Transmission Electron Microscopy (TEM). 
\vspace{-0.2in}
\begin{appendices}
\section{Proof of Theorem \ref{th:GWWithErrors}}\label{sec:ProofThGWWithErrors}
The proof relies on the method of types \cite{CoverBook} as well as on the following assumptions. 
\begin{enumerate}
\item When $\Al=D\pa{s||p}$ and $s> p$ the types for which $D\pa{q||p}\ge D\pa{s||p}$ are $p <  s\le q\le 1$ as well as any $0\le q <p$ for which $D\pa{q||p}\ge D\pa{s||p}$. 
\item Since the probability that a type is in $A$ decays like $2^{-m\cdot D\pa{s||p}}$, the conditional probability of drawing a certain vector of type $q$ given that $q$ is not in $A$ decreases like $2^{-\pa{H\pa{q}+D\pa{q||p}}\cdot m}$ \cite{CoverBook} (i.e., it decays at the same rate as the original probability mass function). 
\item When there are no constraints, the rate at which the average guesswork of any moment increases, is the solution to the following optimization problem
\beqn{}{
\max_{0\le q\le 1}\rho\cdot H\pa{q} -D\pa{q||p}=\rho\cdot H_{1/\pa{1+\rho}}\pa{p}
}
where $\rho\cdot H\pa{q} -D\pa{q||p}$ is a concave function whose maximum is at $s^{\ast}$. 
\end{enumerate}

First let us consider the case where $D=0$.
Since the rate at which the conditional probability mass  function decreases does not change (as stated above), the average guesswork of any moment is the solution to the following optimization problem.
\beq{}{
\max_{q\not\in A}\rho\cdot H\pa{q}-D\pa{q||p}. 
}
When $s^{\ast}\not\in A$ (i.e., when the attacker can guess $s^{\ast}$) which occurs in the range $s{\ast}\le s\le 1$, the average guesswork does not change although the probability of attack failure is no longer zero. However, when $p<s<s^{\ast}$, $s^{\ast}\in A$ and since the function is concave the solution to the optimization problem is $s$, which is the closest element to $s^{\ast}$ among the elements that are not in $A$.

In the case when $D>0$ the proof follows along the same lines as Theorem 1 in \cite{MerhavArikanDist}. This is because the conditional probability remains the same and therefore for the binary case with Hamming distance, the problem can be reduced to the following optimization problem
\beq{}{
\max_{q\not\in A}\pa{\rho\cdot H\pa{q}-D\pa{q||p}}-\rho\cdot H\pa{D}.
}

\vspace{-0.2in}
\section{Proof of Lemma \ref{lem:ProofofMinEntropysubjecttoDistortion}}\label{app:ProofofMinEntropysubjecttoDistortion}
The proof is based on the method of types \cite{CoverBook}.
We need to find the exponential rate at which the sum
\beq{eq:ballprobsum}{
\sum_{i=0}^{m\cdot D}\binom{m}{i} p^{i}\pa{1-p}^{m-i}
}
decreases. Let us define $i=\alpha\cdot m$. From the method of types we know that the exponential growth rate of $\binom{m}{\alpha\cdot m}$ is $H\pa{\alpha}$. On the other hand the rate at which $ p^{\alpha m}\pa{1-p}^{\pa{1-\alpha}m}$ decreases is $H\pa{\alpha}+D\pa{\alpha||p}$. We are interested in the most dominant term in \eqref{eq:ballprobsum}, and therefore we wish to find
\beq{}{
\min_{0\le\alpha\le D}D\pa{\alpha||p}.
}
The solution to this optimization problem is the statement of 
the lemma. 
\end{appendices}

\vspace{-0.2in}
\bibliographystyle{unsrt}
\small
\bibliography{PUF}

\end{document}